\newtheorem{lemma}{Lemma}
\newtheorem{corollary}{Corollary}
\newtheorem{claim}{Claim}
\newtheorem{theorem}{Theorem}
\newtheorem{observation}{Observation}
\title{Covering nearly surface-embedded graphs with a fixed number of balls}
\author{Glencora Borradaile  \and
  Erin Wolf Chambers}
\begin{document}
\maketitle

\begin{abstract}
  A recent result of Chepoi, Estellon and Vaxes [Disc.\ Comp.\ Geom.\
  '07] states that any planar graph of diameter at most $2R$ can be
  covered by a constant number of balls of size $R$; put another way,
  there are a constant-sized subset of vertices within which every other
  vertex is distance half the diameter.  We generalize this
  result to graphs embedded on surfaces of fixed genus with a fixed
  number of apices, making progress toward the conjecture that graphs
  excluding a fixed minor can also be covered by a constant number of
  balls.  To do so, we develop two tools which may be of independent interest. The first gives a bound
  on the density of graphs drawn on a surface of genus $g$ having a
  limit on the number of pairwise-crossing edges.  The second bounds
  the size of a non-contractible cycle in terms of the Euclidean norm
  of the degree sequence of a graph embedded on surface.

\end{abstract}

\section{Introduction}

Chepoi, Estellon and Vax\`{e}s showed there is a constant $\rho$ such
that any planar graph of diameter at most $2R$ has a subset of at most
$\rho$ vertices such that every vertex in the graph is within distance
$R$ of that subset~\cite{planarballcover}. Since this can be viewed as
showing that there is a constant-sized set cover in the set system of
balls of radius $R$, we refer to this property as the {\em ball-cover}
property.  Graphs having constant-sized ball covers admit interval
routing schemes with dilation $\frac{3}{2} \times \text{diameter}$ and
compactness $O(1)$ where dilation measures the indirectness of the
routing scheme and compactness measures the size of the routing
table~\cite{GPRS01}.  We believe the ball-cover property is an
inherently interesting property.  Graphs having this property could
define an interesting class of graphs and perhaps could have broader
utility than previously realized.

We generalize the class of graphs having the ball-cover property to
those graph families that can be embedded on a surface of fixed genus
after the removal of a constant number of vertices (the {\em apices});
the number of balls required depends only on the genus of the surface
(either orientable or non-orientable) and the number of apices.  Since
graphs of bounded treewidth are also known to have the ball-cover
property~\cite{GPRS01} by way of the Graph Minor Structure Theorem,
our result is a significant step toward proving that
fixed-minor-excluded graphs also have the ball-cover property.  
We discuss this more in
Section~\ref{sec:apex}.  We start by sketching the proof for the
planar case as we use a similar, but more general, tool set here.

\subsection{A sketch of the proof of the ball-cover property for planar graphs}

Throughout, graphs are simple, undirected and unweighted.  Let $B(x)$
be the set of all vertices that are within distance $R$ of vertex $x$
in graph $G$; this is the {\em ball centered at $x$}. Let ${\cal B}(G)
= \{B(x)\ : \ x \in V(G)\}$; this is the {\em ball system} of $G$. We
say that ${\cal B}' \subset {\cal B}$ covers $G$ if ${\cal B}'$ is a
set cover of $V(G)$.

The dual of a set system $\cal S$ with ground set $U$ is defined as
follows: the ground set of the dual set system is $\cal S$ and for
every element $x \in U$, the dual system has a set representing the
sets of $\cal S$ containing $x$, i.e., $X = \{ S\ : \ S \in {\cal S},
x \in S\}$.  It is easy to see:
\begin{observation} \label{obs:dual}
  The dual set system of ${\cal B}(G)$ is ${\cal B}(G)$.
\end{observation}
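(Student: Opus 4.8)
The plan is to unwind the definition of the dual and observe that the entire content of the statement is the symmetry of the shortest-path metric on an undirected graph. Write $d(x,y)$ for the distance in $G$ between $x$ and $y$; since $G$ is undirected, $d(x,y)=d(y,x)$, and therefore for all $x,y\in V(G)$ we have the chain of equivalences $y\in B(x)\iff d(x,y)\le R\iff d(y,x)\le R\iff x\in B(y)$. In words, ``$y$ lies in the ball centered at $x$'' is a symmetric relation on $V(G)$.

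Next I would apply the definition of the dual verbatim. The ground set of the dual of ${\cal B}(G)$ is ${\cal B}(G)$, and for each $x\in V(G)$ the dual contains the set $X_x=\{\,B(y) : B(y)\in{\cal B}(G),\ x\in B(y)\,\}$; by the symmetry noted above this is exactly $X_x=\{\,B(y) : y\in B(x)\,\}$. Thus the correspondence $x\leftrightarrow B(x)$ between $V(G)$ and the ground set ${\cal B}(G)$ of the dual carries the set $B(x)=\{\,y : y\in B(x)\,\}$ of ${\cal B}(G)$ onto the set $X_x=\{\,B(y) : y\in B(x)\,\}$ of the dual, element by element. Hence this correspondence is an isomorphism of set systems from ${\cal B}(G)$ onto its dual, which is precisely the assertion.

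The only point that needs a moment of care is that $x\mapsto B(x)$ need not be injective (distinct vertices can have equal balls), so ``is'' should be read as ``is isomorphic to'', with the isomorphism induced by $x\mapsto B(x)$; one checks this is well defined because $B(x)=B(x')$ forces $x\in B(y)\iff x'\in B(y)$ for every $y$, hence $X_x=X_{x'}$, again by symmetry. Beyond this bit of bookkeeping I do not expect any real obstacle: the observation follows immediately once symmetry of the graph metric is in hand.
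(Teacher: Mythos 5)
Your proof is correct and matches the argument the paper intends (the paper simply asserts the observation as ``easy to see''): the whole content is the symmetry $y\in B(x)\iff x\in B(y)$ of the graph metric, combined with the identification $x\leftrightarrow B(x)$, and your remark that ``is'' really means ``is isomorphic to'' (up to coincident balls, which only collapse duplicates and are harmless for the VC-dimension and hitting-set arguments that follow) is the right bit of bookkeeping. Nothing further is needed.
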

Since a hitting set of $\cal S$ (a subset of the ground set that
contains an element in every set) is a set cover of the dual set
system of $\cal S$, we likewise have that the centers of a subset of
balls covering $G$ is a hitting set for the ball system.  A hitting
set of ${\cal B}(G)$ is exactly a subset of vertices within which
every other vertex is distance $R$.

Matou\u{s}ek gives a characterization of set systems that have small
hitting sets~\cite{Matousek04} in terms of the set system's {\em
  fractional-Helly} or {\em $(p,q)$-property} and the dual set
system's {\em VC-dimension}.

\subsubsection*{VC-dimension} A set system $\cal S$ {\em
  shatters} a set $X$ if for every subset $Y$ of $X$ there is a set $S
\in {\cal S}$ such that $S \cap X = Y$. The Vapnik-Chervonenkis
dimension or VC-dimension of $\cal S$ is the maximum size of a set
that $\cal S$ can shatter~\cite{VC71}.  Chepoi et~al. remark that
the VC-dimension of the ball system of a graph excluding $K_{r+1}$ as
a minor is at most $r$~\cite{planarballcover}.  This gives us:
\begin{lemma}\label{lem:vc-dim}
The VC-dimension of ball system of a graph excluding $H$ as a
minor is at most $|H|-1$.
\end{lemma}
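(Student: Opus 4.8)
The plan is to deduce this from the special case already recorded above, namely that the ball system of a graph excluding $K_{r+1}$ as a minor has VC-dimension at most $r$. All that is needed is the observation that excluding $H$ as a minor is at least as strong a hypothesis as excluding $K_{|H|}$ as a minor: $H$ is a subgraph of the complete graph $K_{|H|}$ on the same vertex set, hence $H$ is a minor of $K_{|H|}$, and since the minor relation is transitive, every graph having $K_{|H|}$ as a minor also has $H$ as a minor. Contrapositively, if $G$ excludes $H$ as a minor then $G$ excludes $K_{|H|}=K_{(|H|-1)+1}$ as a minor, so by the remark the VC-dimension of ${\cal B}(G)$ is at most $|H|-1$. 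This is essentially the whole argument, so in the reduction itself there is no real obstacle; the content sits in the cited remark.

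For completeness one might want to recall why that remark holds, and this is the part that requires actual work. The goal is to show that a set $X=\{x_1,\dots,x_t\}$ shattered by ${\cal B}(G)$ forces $K_t$ as a minor of $G$; the VC-dimension bound then follows by contraposition with $t=|H|$. For each $S\subseteq X$ fix a center $c_S$ with $B(c_S)\cap X=S$. The useful realizations are those of the $\binom{t}{2}$ pairs: a shortest path from $c_{\{x_i,x_j\}}$ to $x_i$ concatenated with a shortest path from $c_{\{x_i,x_j\}}$ to $x_j$ is a connected subgraph joining $x_i$ and $x_j$ that contains no other element of $X$, since any such element would lie within distance $R$ of $c_{\{x_i,x_j\}}$. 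One then has to turn the union of these $\binom{t}{2}$ connected subgraphs into $t$ pairwise disjoint, pairwise adjacent, connected branch sets $V_1,\dots,V_t$ with $x_i\in V_i$; resolving the overlaps among the subgraphs is the delicate step, and the construction is the one carried out by Chepoi, Estellon and Vax\`{e}s in~\cite{planarballcover}.

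Finally I would sanity-check the degenerate cases: for $|H|\le 2$ an $H$-minor-free graph is edgeless (or has at most one vertex), so ${\cal B}(G)$ consists of singletons and its VC-dimension is at most $1=|H|-1$, in agreement with the bound.
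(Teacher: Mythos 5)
Your proposal is correct and follows essentially the same route as the paper, which proves nothing beyond citing the remark of Chepoi, Estellon and Vax\`{e}s that a graph excluding $K_{r+1}$ as a minor has ball-system VC-dimension at most $r$; your explicit reduction (excluding $H$ forces excluding $K_{|H|}$ by transitivity of the minor relation) is exactly the implicit step the paper relies on. The sketch of why the cited remark holds and the degenerate-case check are fine additions but not part of the paper's argument.
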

Recall that a minor of a graph $G$ is a graph that is obtained from
$G$ by edge contractions and deletions; a forbidden or excluded minor
is a graph that {\em cannot} be obtained this way.  It follows from
Observation~\ref{obs:dual} that the dual of the ball system of a graph
excluding $K_{r+1}$ as a minor also has VC-dimension at most $r$.

\subsubsection*{Fractional Helly theorems} If a set system is such that
every $d$ sets has a point in common, then the set system is said to
have Helly order $d$. A {\em Helly theorem} is one that shows that
certain set systems of Helly order $d$ have a non-empty intersection.
The first such theorem was given for the Euclidean plane: if a family
of convex sets has a nonempty intersection for every triple of sets,
then the whole family has a nonempty intersection~\cite{Helly23}.  A
set system has {\em fractional} Helly order $(p,q)$, or {\em has the
  $(p,q)$-property}, if among every $p$ sets some $q$ have a point in
common. Matou\u{s}ek gave the following fractional Helly theorem:
\begin{theorem}[Fractional Helly Theorem~\cite{Matousek04}]\label{thm:pq}
  Let $\cal Q$ be a set system having the $(p,q)$-property (for
  $p \ge q$) and whose dual set system has VC-dimension $q-1$.  Then there
  is a constant $\rho$ such that $\cal Q$ has a hitting set of size at
  most $\rho$.
\end{theorem}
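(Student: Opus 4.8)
The plan is to derive Theorem~\ref{thm:pq} by combining a \emph{fractional Helly theorem} — one valid under the stated VC-dimension hypothesis, with no convexity assumed — with the LP-duality and $\epsilon$-net argument of Alon and Kleitman. We may assume $\mathcal Q$ is finite (this suffices for the applications here), and, writing $n=|\mathcal Q|$, that $n\ge p$ (otherwise the $(p,q)$-property says nothing, but $\mathcal Q$ then has a trivial hitting set of size $<p$). Let $U$ denote the ground set.

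First I would note that the \emph{dual shatter function} $\pi^*_{\mathcal Q}(m)$ — the largest number of distinct point–membership patterns realisable by $m$ sets of $\mathcal Q$ — coincides with the primal shatter function of the dual set system $\mathcal Q^*$; since $\mathrm{VC}(\mathcal Q^*)=q-1$, the Sauer--Shelah lemma gives $\pi^*_{\mathcal Q}(m)=O(m^{q-1})$. The key ingredient is then the theorem of Matou\v{s}ek that any set system with dual shatter function $O(m^{q-1})$ has \emph{fractional Helly number at most $q$}: for every $\alpha>0$ there is $\beta=\beta(\alpha,q)>0$ such that, in any finite $\mathcal G\subseteq\mathcal Q$ with $|\mathcal G|=N$, if at least $\alpha\binom Nq$ of the $q$-element sub-subfamilies of $\mathcal G$ have a common point, then some $\beta N$ members of $\mathcal G$ have a common point.

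Next I would feed this statement using the $(p,q)$-property: a double count over the $p$-element subfamilies of any $\mathcal G$ (each of which contains a good $q$-subfamily) shows that at least $\binom Nq/\binom pq$ of the $q$-element subfamilies of $\mathcal G$ are intersecting, so the fractional Helly statement applies with $\alpha=1/\binom pq$ and yields a constant $\beta=\beta(p,q)>0$. Now consider the LP relaxation of the hitting-set problem for $\mathcal Q$ together with its dual, a fractional matching; by LP duality the two share a common optimum $\tau^*$. Take an optimal rational fractional matching and clear denominators to obtain a multiset family in which every point of $U$ lies in at most $M$ of the sets while the total number of sets is $M\tau^*$; applying the fractional Helly conclusion to this multiset family (after checking that a positive fraction of its $q$-tuples are still intersecting) produces a point lying in at least $\beta M\tau^*$ sets, whence $\beta M\tau^*\le M$, i.e.\ $\tau^*\le 1/\beta$ — a constant depending only on $p$ and $q$.

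Finally, bounded $\mathrm{VC}(\mathcal Q^*)$ forces $\mathrm{VC}(\mathcal Q)\le d=d(q)$; normalising an optimal fractional transversal to a probability measure $\mu$ on $U$ makes $\mu(S)\ge 1/\tau^*$ for every $S\in\mathcal Q$, so an $\epsilon$-net for $(U,\mathcal Q,\mu)$ with $\epsilon=1/(2\tau^*)$ is a genuine hitting set, and the $\epsilon$-net theorem bounds its size by $O\!\big(\frac d\epsilon\log\frac1\epsilon\big)=O(d\,\tau^*\log\tau^*)$; this last quantity is the desired constant $\rho=\rho(p,q)$. The main obstacle is the fractional Helly theorem of the second paragraph: extracting a \emph{linearly} large intersecting subfamily from a purely combinatorial VC-type hypothesis, with no geometry available to exploit, is the real content (and is precisely the work of \cite{Matousek04}); the remaining ingredients are the now-standard Alon--Kleitman packaging. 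The one further delicate point is the reduction to weighted/multiset families in the LP step — ensuring that both the $(p,q)$-counting and the ``point of linear depth'' conclusion survive passage to multiplicities — which again follows Alon--Kleitman.
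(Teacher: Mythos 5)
The paper does not prove this statement at all---it is imported verbatim from Matou\v{s}ek \cite{Matousek04}---and your outline is exactly the derivation used there: bounded dual VC-dimension gives a polynomial dual shatter function, hence Matou\v{s}ek's fractional Helly theorem, and the Alon--Kleitman LP-duality plus $\epsilon$-net packaging (using that bounded dual VC-dimension also bounds the primal VC-dimension) turns the $(p,q)$-property into a constant-size hitting set. Your argument is correct in structure, with the caveat you already flag: the fractional Helly theorem for systems with dual shatter function $o(m^{q})$ is itself the main content of \cite{Matousek04}, so what you have is a faithful reduction to that result (the route the cited source itself takes) rather than a self-contained proof.
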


Given Lemma~\ref{lem:vc-dim}, one could therefore show that, for a
fixed minor $H$, $H$-minor free graphs have the ball-cover property
by showing that the corresponding ball system has fractional Helly order $(p,
|H|)$ for some fixed $p \ge |H|$.  Chepoi et~al.\
do just this for planar graphs.  Starting with $p$ vertices, they consider the pairwise
shortest paths between these vertices; each shortest path contains a
vertex that is contained by the balls centered on the paths'
endpoints.  Viewing these shortest paths as edges of a complete graph
and drawn on the plane (as inherited from a drawing of the original
graph), they invoke a result showing that such a drawing of $K_p$, for $p$ sufficiently large, must contain at least 7 pairwise
crossing edges.  The 7 pairwise crossing shortest paths then witness a
point in common to 5 of the balls.  We use this idea at the heart of
our proof for surface-embedded graphs.

\subsection{Surface-embedded graphs}

We start by extending this result to graphs embedded on more general
surfaces. We first give some definitions.

 A $2$-manifold (or surface) $S$ is a Hausdorff space in
which every point has a neighborhood homeomorphic to the Euclidean
plane or the closed half plane.  A cycle in a surface is a continuous
function from $S^1$ to the surface; the cycle is called simple if the
map is injective. A simple cycle $\gamma$ is separating if $S
\backslash \gamma$ is not connected; see Figure~\ref{fig:homologous}.  The genus $g$ of a surface $S$
is the maximum number of pairwise disjoint non-separating cycles $\gamma_1,
\gamma_2, \ldots, \gamma_g$ such that $S \setminus (\gamma_1 \cup \cdots \cup \gamma_g)$ is
connected.  
Note that cutting a surface along a non-separating cycle reduces the genus by 1; this is a common
algorithmic technique for reducing the complexity of a surface.
A surface is non-orientable if and only if it contains a
subspace homeomorphic to the M\"{o}bius band and is 
otherwise orientable.

An embedding of a graph $G=(V,E)$ on a surface $S$ is a drawing of $G$
on $S$, such that vertices are mapped to distinct points in $S$ and
edges are mapped to \emph{internally} disjoint simple paths.  A face
of an embedding is a maximal connected subset of $S$ that does not
intersect the image of $G$.  An embedding is cellular if all of its
faces are homeomorphic to a topological open disc.  We say that $G$ is
a graph of (orientable or non-orientable) genus $g$ if $G$ has
a cellular embedding on a surface of (orientable or non-orientable)
genus $g$.

We will briefly use the notion of $\mathbb{Z}_2$-homology in this paper and so include a 
brief description for completeness; we refer the reader to a topology text for full 
details~\cite{h-at-02,m-t-00}.  A homology cycle is a 
linear combination of oriented cycles with coefficients from a ring
$R$; when $R=\mathbb{Z}_2$, these homology cycles are even-degree subgraphs of $G$.  
A boundary subgraph is the boundary of a union of faces of $G$.  
Two subgraphs are homologous if their symmetric difference is a boundary 
subgraph, or, more intuitively, if
they can be deformed to each other (where the deformation may include
merging intersection cycles or splitting at self-intersections or
deleting trivial separating cycles); see Figure~\ref{fig:homologous} for an example.
Boundary cycles are null-homologous, and since every separating cycle is a boundary cycle, we can view separating cycles as the identity element for homology classes.

\begin{figure}
\begin{center}
\includegraphics[width=3in]{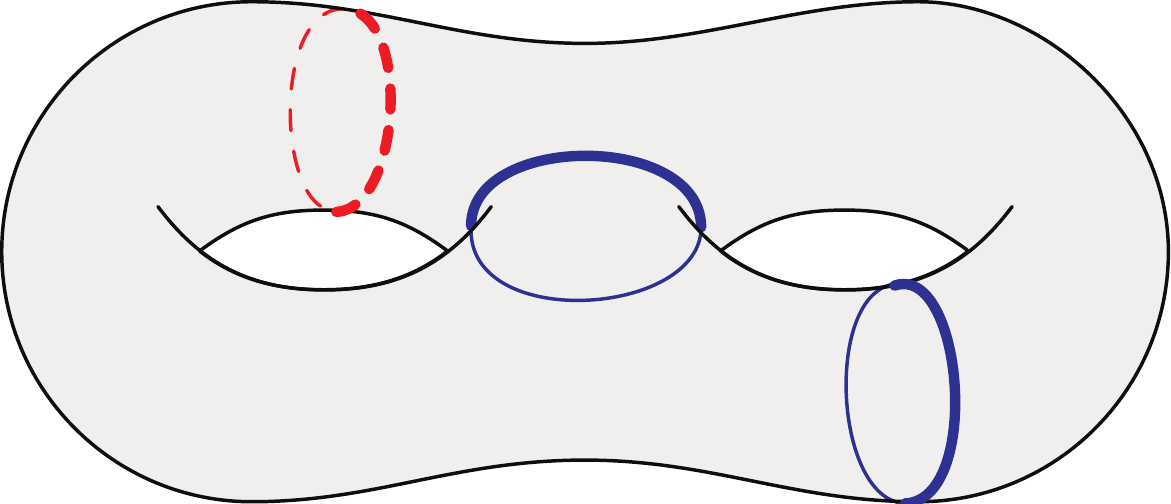}
\includegraphics[width=3in]{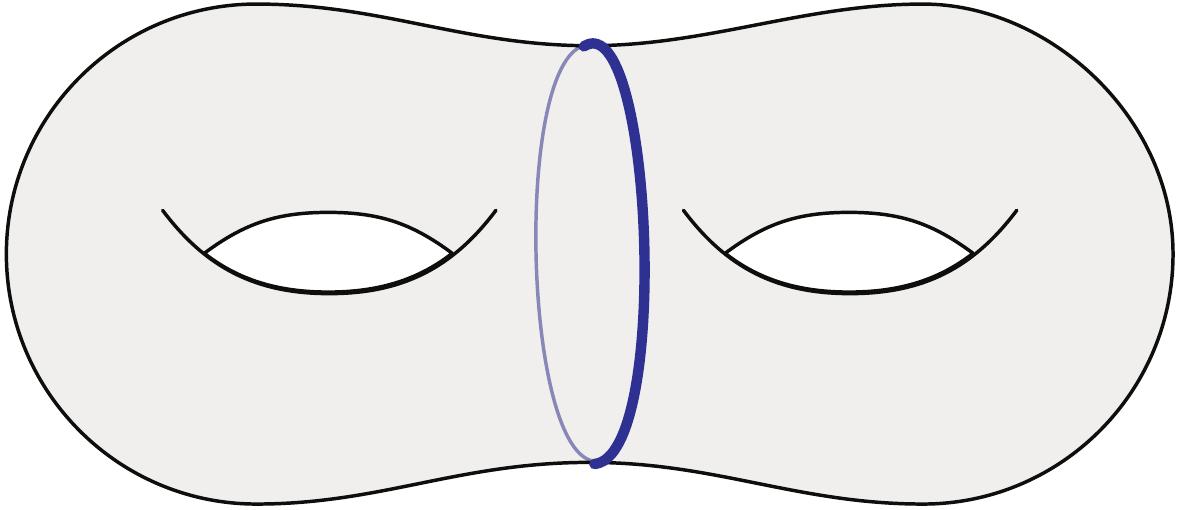}
\end{center}
\caption{Left: An example of homologous cycles: the single dashed red non-separating cycle (above on left) is $\mathbb{Z}_2$-homologous to two solid blue cycles. Right: A null-homologous separating cycle.}
\label{fig:homologous}
\end{figure}

\subsection{Our contribution}

The bulk of this paper focusses on showing that a graph of genus $g$
has the ball-cover property by showing that its ball system has the
$(p_g,q_g)$-property for numbers $p_g$ and $q_g$ that depend only on
$g$ (Section~\ref{sec:pq}). Since $K_n$ has orientable genus $\lceil \frac{1}{12}(n-3)(n-4)
\rceil$ and non-orientable genus $\lceil \frac{1}{6}(n-3)(n-4)
\rceil$\cite{ringel-youngs68}, we set $q_g = c\cdot g^2$ (where $c$
depends only on whether the surface in question is orientable).  Then, since a
graph of genus at most $g$ excludes $K_{q_g}$ as a minor, the
VC-dimension for a graph of genus at most $g$ is at most $q_g-1$.  By
Observation~\ref{obs:dual}, Lemma~\ref{lem:vc-dim} and the Fractional
Helly Theorem, we will get:
\begin{theorem} \label{thm:main}
  There exists a constant $\rho_g$ (depending only on $g$) such that
  any graph of genus at most $g$ and diameter at most $2R$ can be
  covered by at most $\rho_g$ balls of radius $R$.
\end{theorem}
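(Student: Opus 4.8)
The proof follows the template laid out in the excerpt: it suffices to show that the ball system $\mathcal{B}(G)$ of a genus-$g$ graph $G$ with $\diam{G} \le 2R$ has the $(p_g, q_g)$-property for $q_g = c\cdot g^2$ and some $p_g \ge q_g$ depending only on $g$, and then invoke Lemma~\ref{lem:vc-dim}, Observation~\ref{obs:dual}, and Theorem~\ref{thm:pq}. So I would fix $p_g$ (to be determined) and take any $p_g$ vertices $v_1, \dots, v_{p_g}$; I need to find $q_g$ of these whose balls share a common vertex. For each pair $v_i, v_j$, fix a shortest path $P_{ij}$ between them; since $\diam{G} \le 2R$, every $P_{ij}$ has a vertex $m_{ij}$ at distance $\le R$ from both endpoints, so $m_{ij} \in B(v_i) \cap B(v_j)$. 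Thus if I can find a set of $\binom{q_g}{2}$ pairs forming a clique on $q_g$ of the $v_i$'s such that all the corresponding paths $P_{ij}$ pass through a \emph{single} common vertex $x$, then $x$ witnesses $x \in \bigcap B(v_i)$ over those $q_g$ vertices, giving the $(p_g,q_g)$-property.

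**The crossing argument.** View the paths $P_{ij}$ as the edges of a drawing of $K_{p_g}$ on the genus-$g$ surface $S$, inherited from the embedding of $G$. The heart of the argument is a Ramsey-type / crossing-number statement: if $p_g$ is large enough (as a function of $g$ and the target $q_g$), then this drawing must contain $q_g$ edges that \emph{pairwise cross} — equivalently, $q_g$ of the paths $P_{ij}$ that pairwise intersect. Two shortest paths $P_{ij}$ and $P_{k\ell}$ that share a point can, by a standard shortest-path surgery, be assumed to share a vertex; more care is needed to force all $\binom{q_g}{2}$ of them through one common vertex, but the standard move is: among $q_g$ pairwise-crossing shortest paths, one shows (perhaps after first passing to a slightly larger pairwise-crossing family and pruning) that they can be rerouted/chosen to pass through a common vertex $x$ while remaining shortest paths between the relevant endpoints. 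This is exactly the step the excerpt flags as "the heart of our proof," generalizing the planar case where a drawing of $K_p$ for large $p$ forces $7$ pairwise-crossing edges, $5$ of which then witness a common ball point.

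**The density tool.** The existence of the required crossing family is where the two announced tools enter. To guarantee that a drawing of $K_{p_g}$ on a genus-$g$ surface has many pairwise-crossing edges, I would argue contrapositively: a drawing on $S$ with \emph{few} (say, at most $t$) pairwise-crossing edges is sparse — this is precisely the first announced tool, bounding the number of edges of a graph drawn on a genus-$g$ surface with no $t+1$ pairwise-crossing edges by something like $O_t(n) + O(g)$, in the spirit of the planar quasi-planar bounds. Since $K_{p_g}$ has $\binom{p_g}{2}$ edges, choosing $p_g$ large relative to this density bound forces more than $q_g - 1$ pairwise-crossing edges. The second announced tool — bounding a shortest non-contractible cycle by the Euclidean norm of the degree sequence — is presumably what lets one handle the case where the relevant portion of the drawing lives in a disc (reducing to the planar crossing result of Chepoi et al.) versus wrapping around handles, by cutting along a short non-contractible cycle when the drawing is not effectively planar, and inducting on the genus.

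**Main obstacle.** The routine part is the invocation of the Fractional Helly machinery; the genuinely hard part is forcing the pairwise-crossing paths through a \emph{single common vertex} rather than merely pairwise-intersecting, while keeping them shortest paths — the surgery arguments must be set up so that rerouting one path does not destroy the shortest-path or pairwise-crossing property of the others. A secondary difficulty is getting the density bound for $t$-quasi-planar graphs on a surface with the correct (linear in $n$, additive in $g$) dependence, since the standard planar proofs use Euler's formula in a way that must be adapted to positive genus; this is where controlling a short non-contractible cycle becomes essential, letting one cut the surface down and recurse. Once both tools are in hand with clean quantitative bounds, $p_g$ and $q_g$ can be fixed as explicit functions of $g$, and Theorem~\ref{thm:main} follows.
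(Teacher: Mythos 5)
Your overall architecture matches the paper's: reduce to showing the $(p_g,q_g)$-property of the ball system, combine with Lemma~\ref{lem:vc-dim}, Observation~\ref{obs:dual} and the Fractional Helly Theorem, and obtain the pairwise-crossing family by a contrapositive density argument for drawings of $K_{p_g}$ on a genus-$g$ surface (the paper's Theorem~\ref{thm:cross}; note its bound is $(2g^2)^k c_k n\log n$, not linear in $n$ with additive genus, but any subquadratic bound suffices for your purpose, so that discrepancy is minor). However, there is a genuine gap at the step you yourself flag as the main obstacle: you require all $\binom{q_g}{2}$ paths among $q_g$ centers to be rerouted through a \emph{single} common vertex while staying shortest paths. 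No such statement is proved or needed in the paper, and it is not achievable in general: pairwise-intersecting shortest paths need not admit a common vertex, and rerouting one path to pass through a chosen intersection point of two others will typically destroy either shortness or the intersection pattern with the remaining paths. Your proposal leaves this step entirely unresolved, so as written the argument does not close.

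The paper avoids this difficulty with a different and much weaker requirement. For each shortest path $P_{ij}$ it fixes a midpoint $m_{ij}\in B(c_i)\cap B(c_j)$ (which exists since the paths have length at most $2R$). The key Claim is purely metric: if $P_{ij}$ and $P_{k\ell}$ share a vertex $x$, then, taking the endpoint closest to $x$, either $m_{ij}\in B(c_k)\cup B(c_\ell)$ or $m_{k\ell}\in B(c_i)\cup B(c_j)$. Applying this to a family ${\cal P}'$ of at least $2q_g-3$ pairwise-intersecting paths (obtained from Theorem~\ref{thm:cross} with the drawing made topological so that edge crossings correspond exactly to path intersections), an averaging argument produces one midpoint $m_{ij}$ lying in balls of endpoints of at least $\tfrac{1}{2}(|{\cal P}'|-1)\ge q_g-2$ paths, hence in $q_g$ balls altogether. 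So the common point is a midpoint of one path, not a crossing point shared by all paths, and only pairwise intersection is ever used. To repair your proposal, replace the ``single common vertex'' step with this midpoint-plus-averaging argument (and take the pairwise-crossing family of size about $2q_g$ rather than $q_g$ to feed the averaging).
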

We show that the same holds if the graph additionally has a fixed
number of apices and discuss how one might generalize to
fixed-minor-excluded graph families in Section~\ref{sec:apex}.

In order to prove that the ball system for a genus-$g$ graph has the
$(p_g,q_g)$-property, we show that there is a small set of edges of a
surface-embedded graph whose removal leaves a planar graph
(Section~\ref{sec:sep}) and give bounds on the number of edges in a
graph drawn on a surface of fixed genus having a limit on the number
of crossings (Section~\ref{sec:cross}).  The former result can be used
to generalize an edge-separator result for planar graphs due to Gazit
and Miller~\cite{GaMi90}.  Both these
results are likely of more general interest.  We give background on
these problems in their relevant sections.  

The takeaway from these generalizations will allow us to argue that
any topological drawing of $K_n$ on a surface of orientable or
non-orientable genus $g$ must have a large subset of edges that
pairwise cross.
In Section~\ref{sec:cross}, we will formally define what constitutes a
topological drawing on a surface of genus $g$ and prove this theorem.

\section{A norm-sized, planarizing edge set for surface-embedded graphs} \label{sec:sep}

In this section, we show there is a small set of edges in a surface-embedded graph whose removal leaves a planar graph.  We start by bounding the size of a non-separating cycle:
\begin{theorem}\label{thm:cycle}
  The shortest non-separating cycle of a graph $G$ embedded on a surface has length at most ${1\over 2}||G||_f$.
\end{theorem}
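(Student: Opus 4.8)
The plan is to fix a shortest non-separating cycle $C$ and show that its very existence, together with its minimality, forces $\sum_{v\in V(G)}\deg(v)^2$ to be large. (If $G$ has no non-separating cycle there is nothing to prove.) Write $\ell=|C|$, and recall that $\|G\|_f$ is the Euclidean norm of the degree sequence, so $\|G\|_f^2=\sum_v\deg(v)^2$; the claim is equivalent to $\sum_v\deg(v)^2\ge 4\ell^2$. The first step records a ``no shortcut'' property of $C$: for $x,y\in V(C)$ with $d_C(x,y)\le\lfloor\ell/2\rfloor$ one has $d_G(x,y)=d_C(x,y)$. Indeed, writing $C=P_1\cup P_2$ as the union of its two $x$--$y$ arcs, a strictly shorter $x$--$y$ path $Q$ in $G$ would make the symmetric differences $P_1\triangle Q$ and $P_2\triangle Q$ even-degree subgraphs with $[P_1\triangle Q]+[P_2\triangle Q]=[C]\ne 0$ in $\mathbb{Z}_2$-homology; whichever of the two is non-null-homologous has fewer than $\ell$ edges, hence contains a non-separating cycle shorter than $C$ -- a contradiction. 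This is exactly the ``shortest path plus homology'' trick of the planar sketch, run inside the cycle space.

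Next I cut the surface $S$ along $C$. Since $C$ is non-separating, the result $S'$ is a connected surface with boundary, whose boundary consists of one or two circles, each a copy $C^{+},C^{-}$ of $C$; the graph becomes $G'$ on $S'$, with the vertices of $C$ split and all other vertices (and their degrees) unchanged. The crucial point is that these boundary copies lie far apart in $G'$: any $G'$-path from one side of $C$ to the other has length at least $\lceil\ell/2\rceil$. Indeed, given such a path of length $q$, gluing it back and closing it up with the shorter arc of $C$ between its endpoints produces a closed walk that crosses $C$ an odd number of times (it passes from one side of $C$ to the other exactly once), hence is non-null-homologous, hence contains a non-separating cycle of length at most $q+\lfloor\ell/2\rfloor$; minimality forces $q\ge\lceil\ell/2\rceil$. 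Consequently a multi-source breadth-first search in $G'$ from $C^{+}$ has pairwise disjoint levels $W_0=C^{+},W_1,\dots,W_{\lceil\ell/2\rceil-1}$, none meeting $C^{-}$, and each $W_i$ with $1\le i\le\lceil\ell/2\rceil-1$ separates the two boundary circles of $S'$.

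The final, and most delicate, step is the counting. By minimality every cycle in the homology class $[C]$ has at least $\ell$ edges; since each $W_i$ separates the two boundary circles of $S'$, pushing that separator out into the surface and snapping it back onto the graph yields a cycle in $[C]$ lying close to $W_i$, so $W_i$ meets at least roughly $\ell$ edges of $G$. Tracking how these edges are distributed over the $\lceil\ell/2\rceil$ levels, and adding the contribution of $C$ itself, gives $\sum_v\deg(v)^2\ge 4\ell^2$, i.e. $\ell\le\tfrac12\|G\|_f$. The obstacle I expect is precisely this bookkeeping: a topological separating curve through few vertices can nonetheless cross many edges, so the estimate has to be carried on edge-incidences -- equivalently on degrees, not on vertex counts -- which is both why the right-hand side is the Euclidean norm of the degree sequence and why the constant comes out $\tfrac12$ rather than something larger. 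The one-sided case (where $C$ bounds a M\"obius band) needs only a cosmetic variant: cutting then yields a single boundary circle double-covering $C$, and the intersection-number step is applied to $[C]\ne 0$ in $\mathbb{Z}_2$-homology exactly as above.
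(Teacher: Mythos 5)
There are two problems here, and the second is fatal. First, you have misread the quantity in the statement: $||G||_f$ is the \emph{face} norm, $||G||_f^2=\sum_{f\in\mathcal{F}}|f|^2$, not the Euclidean norm of the degree sequence --- that is the vertex norm $||G||_\delta$, which enters the paper only afterwards, by applying Theorem~\ref{thm:cycle} to the \emph{dual} graph to obtain the planarizing set of Lemma~\ref{lem:planarize}. So the inequality you set out to prove, $\sum_v \deg(v)^2\ge 4|C|^2$, is not the theorem, and it is not its dual form either (the dual statement bounds the shortest non-separating cycle of $G^*$, not of $G$).

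Second, the step you yourself flag as the ``obstacle'' --- turning ``each BFS level of the cut-open surface carries a separator in the class $[C]$, hence of length at least $|C|$'' into a bound by a squared norm --- is exactly where the paper's proof does all of its work, and your sketch supplies no mechanism for it. The paper's Lemma~\ref{lem:homol} is what makes ``push the separator into the surface and snap it back onto the graph'' precise: the cycle homologous to $C$ through the $i$-th level is assembled from \emph{level edges}, which are pairwise non-crossing chords of faces of $G$, not edges of $G$; so your assertion that each level ``meets at least roughly $\ell$ edges of $G$'' is unjustified as stated. Each chord $uv$ is then charged to the shorter boundary path $P_{uv}$ of its face, and the Gazit--Miller inequality (for pairwise non-crossing chords $H$ of a face $f$, $\ell(H)\le\frac{1}{8}|f|^2$) is what allows the charges from \emph{all} levels to be summed inside each face without double counting, giving $\sum_i \ell(C_i)\le\frac{1}{8}||G||_f^2$; combined with $\ell(C_i)\ge|C|$ for each of the $s\ge |C|/2$ levels, this yields $|C|\le\frac{1}{2}||G||_f$. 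A naive count of the form ``about $\ell/2$ disjoint levels, each meeting about $\ell$ edges'' gives only $|C|=O(\sqrt{|E|})$, which for triangulations is weaker than the stated bound even in the constant, so the chord-to-face charging (or some substitute for it) cannot be dispensed with. Your opening ``no shortcut'' observation and the lower bound of $|C|/2$ on the distance between the two copies of $C$ are fine --- the latter is the paper's $A\cup B$ argument --- but they are the easy part of the proof.
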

where 
\[
||G||_f = \sqrt{\sum_{f \in {\cal F}} |f|^2}
\]
is the {\em face-norm} of $G$ and ${\cal F}$ is the set of faces of
$G$.  We use a sequence of $g$ non-separating cycles to {\em
  planarize} $G$.  The face-norm was used by Gazit and Miller to
tighten the bound on the size of edge-separators for planar
graphs~\cite{GaMi90}.  Theorem~\ref{thm:cycle} implies an $O(g
||G||_f)$-sized edge separator for genus-$g$ graphs.  We discuss some
open problems in this vein at the end of the paper.

Let $G$ be a graph with a cellular embedding on a surface of genus $g$
(either orientable or not).  We start with a shortest non-separating
cycle $C$ and generate an ordered family of disjoint cycle sets $\cal
C$ each of which is homologous to $C$.  We use this family to build
another non-separating cycle $C'$ formed by one vertex from each set
in $\cal C$.  Since $C$ is shortest, $C'$ acts a witness giving a
lower bound on $|\cal C|$.  Overall, this gives a lower bound on the
number of edges in $\cal C$, and so an upper bound on $|C|$.
  
We appeal to a combinatorial embedding of the graph which gives, for
each vertex $v$, a clockwise ordering of the edges incident to $v$ as
they are embedded around $v$~\cite{Edmonds60,Youngs63}.  We note that any such embedding can be
maintained under operations such as contraction, deleting, or cutting
along a cycle, via appropriate unions, deletions, or duplications of
the vertex lists which maintain the clockwise orderings; full details are described by Mohar and Thomassen~\cite{mt-gs-01-ch4}.

In the following $\partial f$ denotes the boundary of face $f$.

\begin{lemma} \label{lem:homol}
  Let $G$ be a graph with a cellular embedding on a surface $\cal S$, either orientable or non-orientable.  Let $\cal F$ be
  a set faces of $G$.  We can add a set $L$ of edges to $G$ such that
  \begin{itemize}
  \item $L$ can be incorporated into the embedding of $G$ in a noncrossing way.
  \item The endpoints of $L$ are the set of vertices at distance one from the
    boundaries of $\cal F$.
  \item $L$ decomposes into a set of cycles that is homologous to the boundaries of
    $\cal F$.
  \end{itemize}
\end{lemma}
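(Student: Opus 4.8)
The plan is to build $L$ by processing each face $f \in \cal F$ locally and then gluing the local pieces together along the embedding. Concretely, for a single face $f$, I would walk around its boundary $\partial f$ and, at every vertex $v$ at distance one from $\partial f$ (i.e.\ every vertex with a neighbor on $\partial f$, but not on $\partial f$ itself when $\partial f$ is a simple cycle — in general a vertex one edge ``outside'' the boundary walk), add a new edge that runs parallel to the portion of $\partial f$ visible from $v$. The cleanest way to phrase this: subdivide the region just outside $\partial f$, or equivalently, for each face $f$ take the new edges to form a cycle that is the boundary of a slightly ``fattened'' copy of $f$ — push $\partial f$ outward by one vertex layer. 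This fattened boundary is a closed walk through exactly the vertices at distance one from $\partial f$, and by construction it bounds, together with $\partial f$, an annular strip of faces; hence it is homologous to $\partial f$. Taking the (disjoint) union of these pushed-out boundaries over all $f \in \cal F$ and then taking the symmetric difference / canonical cycle decomposition gives the cycle set claimed in the third bullet.

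The three bullets are then verified as follows. For the first (noncrossing incorporation), I would use the combinatorial embedding: each new edge of $L$ is inserted into the rotation system at its two endpoints immediately ``next to'' the edges leading toward $\partial f$, so that it can be drawn inside the union of faces incident to those endpoints without meeting $G$ or any other edge of $L$ coming from the same face; edges of $L$ coming from different faces of $\cal F$ live in disjoint regions (the fattened annuli around distinct faces), so they do not interfere. This is where I'd invoke the remark from the text that the combinatorial embedding is maintained under local insertions via appropriate splicing of the rotation lists. For the second bullet, the vertex set of the pushed-out boundary of $f$ is, essentially by definition of ``push out by one layer,'' precisely the set of vertices at distance one from $\partial f$; summing over all $f$ gives exactly the stated endpoint set, modulo checking that a vertex at distance one from two different faces of $\cal F$ still only contributes endpoints (not isolated vertices) — which holds because it gets an incident $L$-edge from each such face. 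For the third bullet, each pushed-out cycle together with $\partial f$ bounds a union of faces (the annular strip between them), so it is homologous to $\partial f$; summing over $\cal F$, the cycle decomposition of $L$ is homologous to $\sum_{f \in \cal F} \partial f$, the boundary of $\cal F$.

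The main obstacle I anticipate is degenerate local structure: $\partial f$ need not be a simple cycle (a face boundary in a cellular embedding can repeat vertices and edges, e.g.\ a pendant edge traversed twice, or a cut vertex), so ``the vertex layer one step outside $\partial f$'' and ``the fattened boundary cycle'' must be defined in terms of the facial \emph{walk} and the rotation system rather than naively, and one has to check that the pushed-out object still decomposes into honest (closed) cycles and still bounds the right strip of faces. I would handle this by working entirely in the combinatorial embedding: traverse the facial walk of $f$ as a sequence of darts, and for each dart $uv$ on the walk with $v$ having a further neighbor $w$ in the rotation at $v$ (on the correct side), route an $L$-edge along consecutive such $w$'s; the resulting closed walk is exactly the boundary of the union of the corner-faces met, which is the homologous strip. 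A secondary, minor technical point is that $L$ is allowed to have multiplicities or to coincide partially with existing edges of $G$ when the distance-one layer is ``thin'' (e.g.\ the graph is small near $f$); since homology is over $\mathbb{Z}_2$ this causes no problem — such coincidences just cancel in the symmetric difference — but it should be acknowledged so that ``adding a set $L$ of edges'' is read in the multiset sense modulo $2$.
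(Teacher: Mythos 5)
Your construction is the same in spirit as the paper's (level edges routed through the distance-one neighbourhood in rotation-system order, parallel to the face boundary, with homology argued via the strips/disks they bound), but there is a concrete gap: your gluing step rests on the assertion that edges of $L$ coming from different faces of $\cal F$ ``live in disjoint regions (the fattened annuli around distinct faces).'' That is false precisely when two faces $f,g \in {\cal F}$ are adjacent --- sharing a vertex $x$, or joined by an edge $uv$ with $u \in \partial f$ and $v \in \partial g$. In that case the pushed-out layer of $f$ runs through $v$, which is at distance \emph{zero} from the boundaries of $\cal F$, so the second bullet fails; and near a shared vertex $x$ the pushed-out walks of $f$ and of $g$ interleave in the rotation at $x$ and cross, so the first bullet fails as well. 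Simply skipping such vertices breaks the closed-walk/annulus argument behind the third bullet. The paper's proof devotes its entire first step to this: it cuts the graph open along each shared vertex and each connecting edge, merging adjacent faces of $\cal F$ (minimally, so that each merged face remains a disk, a point that matters on non-orientable surfaces), observes that the merged boundaries are $\mathbb{Z}_2$-homologous to the originals because the duplicated edges cancel, and that the set of distance-one vertices is unchanged --- and only then builds level edges, now for faces that are pairwise at distance at least two. You considered only the benign case of a vertex at distance one from two faces; the distance-zero interference is the missing idea, and without it your argument only proves the lemma for sets $\cal F$ of pairwise far-apart faces.

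A second, smaller gap concerns edges with both endpoints on $\partial f$ that do not lie on the facial walk (chords of $f$; after the paper contracts $\partial f$ these become self-loops). Your dart-based routing would then produce a level edge with an endpoint $w \in \partial f$, again violating the second bullet; your ``degenerate local structure'' paragraph treats non-simple facial walks but not this situation. The paper handles it by subdividing such loops, building level edges through the subdivision vertices, and then merging the two incident level edges at each subdivision vertex (and discarding degenerate endpoint-less pieces and bigons), checking that every discarded closed curve bounds a disk so the homology class is unaffected. Some equivalent ``jump over the chord'' cleanup, with that homology check, needs to be added to your proof.
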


\begin{proof}

  For a face $f \in {\cal F}$, let $\partial f$ denote the cycle in
  $G$ giving $f$'s boundary, taken in clockwise order.  Let $X$ be the
  set of vertices at distance 1 from $\cal F$ in $G$.  If $f, g \in
  {\cal F}$ are adjacent in $G$ (that is, there is an edge $uv$ such
  that $u \in \partial f$ and $v \in \partial g$ or $f$ and $g$ share
  a vertex $x$), then the vertices at distance 1 from $\partial f$
  interferes with $\partial g$.  To avoid this, we merge adjacent
  faces.  If $f$ and $g$ share a vertex $x$, we cut open the graph at $x$, merging the interiors of $f$ and $g$ and creating two copies of $x$, both on the boundary of the newly created face.  If $f$ and $g$ are connected by an edge $uv$, we cut open the graph along $uv$, merge the interior of
  $f$ with that of $g$ resulting in face $h$.  The edge $uv$ is
  duplicated and both copies appear in $\partial h$.  We repeat this
  operation {\em minimally} until the distance between every pair of faces is at least
  2: that is, performing a sequence of such operations will guarantee that the interior of the resulting faces are homeomorphic to a disk. 
 (Note that on a non-orientable surface, this minimality avoids the possibility that the union of neighboring faces spans a M\"obuis strip, and so the interior remains a topological disk.) 
   Let ${\cal F}'$ be the resulting set of faces and $G'$ the
  resulting graph.  Note that the boundaries of $\cal F$ are
  $\mathbb{Z}_2$-homologous to the boundaries of ${\cal F}'$, since
  the introduction of $uv$ twice cancels under $\mathbb{Z}_2$
  homology.  Note further that interior of each face in ${\cal F}'$ is
  homologous to a disk and thus the boundaries are contractible, and
  the set of vertices at distance one from ${\cal F}'$ is still $X$,
  the set of vertices at distance 1 from ${\cal F}$.

  Let $G''$ be the graph obtained by contracting the boundaries of the
  faces of ${\cal F}'$.  Let $F'$ be the vertices resulting from these
  contractions.  Note again that the set of vertices at distance 1 from
  $F'$ in $G''$ is still $X$, since each vertex at distance 1 from $F'$ must also be within distance 1 of some vertex in $\partial F$, and vice versa.

We will build a cycle that is homologous to each face in ${\cal F}'$ whose vertices are among $X$.  Since the faces in ${\cal F}'$ are at distance at least two from each other, the cycles we construct will not interact with each other.

 \begin{figure}[ht]
    \centering
    \includegraphics[width=3in]{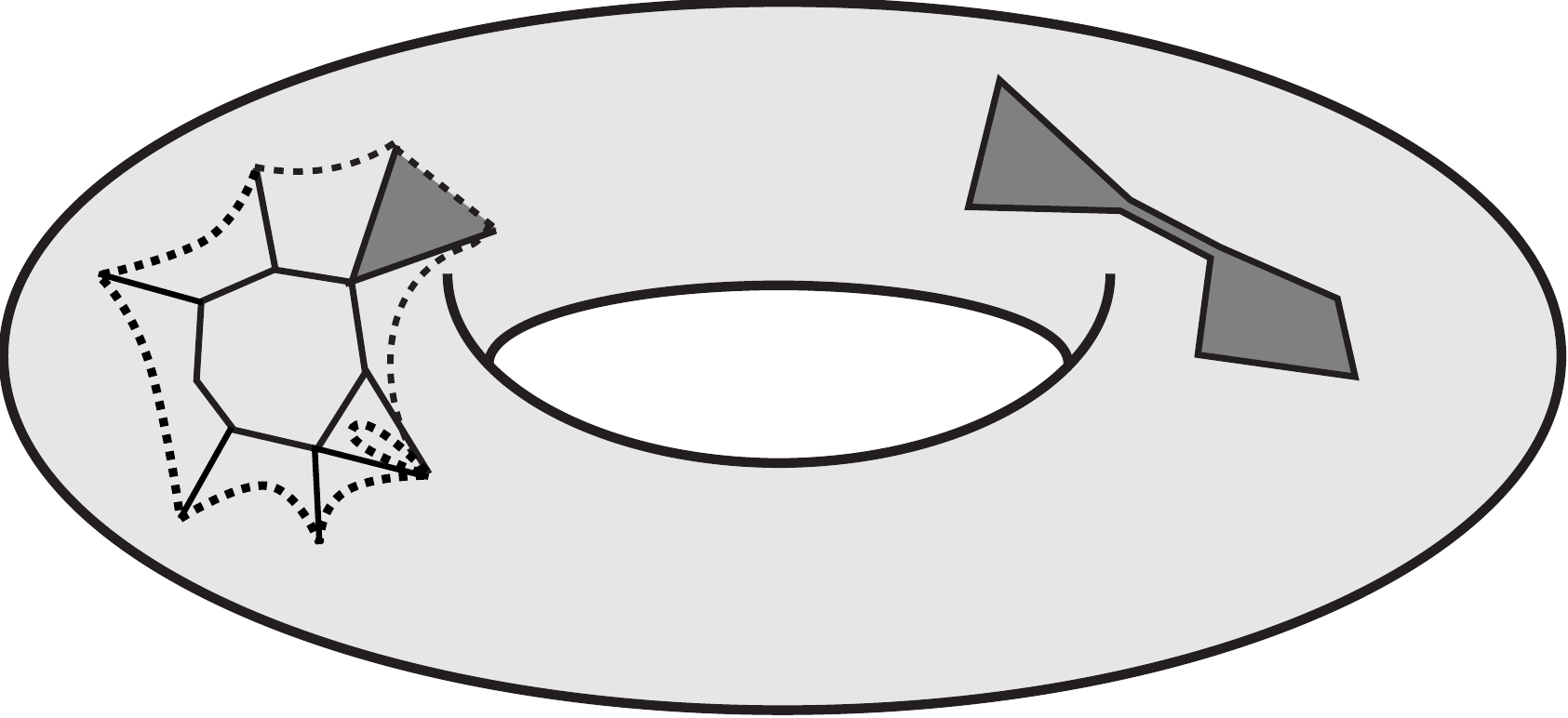}
    \caption{Cycles which are connected by an edge are merged into a single face (shaded above, right), and level edges (shown dashed above) are embedded so that the boundary of the face, incident edges and new edge bounds a topological disk (shaded above, left).}
    \label{fig:cycles}
  \end{figure}

Subdivide every self-loop $\ell$ adjacent to a vertex
in ${\cal F}'$ into two edges  with a vertex $v_\ell$.  Let $G'''$ be the resulting graph.  The set of vertices at distance 1 is now $X'$, which consists of vertices from $X$ and vertices which came from loop subdivisions.

For each vertex $f \in F'$, consider the cyclic clockwise ordering of the
edges incident to the vertex corresponding to $f$ in the embedding of $G'''$.  For every two
consecutive edges $fu$ and $fv$ in this order we introduce the edge $uv$ and call it
a {\em level edge}.  Edge $uv$ can be embedded to be arbitrarily close to $fu$ followed by $fv$; on the original surface, this corresponds to a path following the edge from $u$ to the face $f$, followed by a (possibly empty) portion of the face boundary $\partial f$, followed by the edge from $f$ to $v$; see Figure~\ref{fig:cycles}.
Let $L$ be the set of all such
edges.   Since each such edge can be embedded as described to follow two adjacent edges in the clockwise ordering around the vertex $f$, $G''' \cup L$ can be embedded in a non-crossing way.  Note that self-loops and parallel edges may be introduced this way, e.g.\ when a vertex $f \in F$ has degree 1 or 2, respectively.  See Figure~\ref{fig:loopleveledges}.

The level edges corresponding to $f$ inherit a cyclic ordering from
the ordering of the edges adjacent to $f$.  That is, $uv$ and $vw$ are
consecutive in this ordering if $fu,fv,fw$ are consecutive in the
ordering of edges adjacent to $f$. Further, given how we have embedded $uv$, we know that the cycle $\partial f$ union the edges $fu, uv, fv$ bounds a topological disk.
This implies a partitioning of $L$ into a set of
cycles ${\cal C}$ that is homologous to the boundaries of ${\cal F}'$: simply replace each portion of a face $\partial f$ with the path $fu, uv, fv$.  Since we are (in $\mathbb{Z}_2$ homology sense) adding a set of disks to a cycle, each new cycle is homologous to the original.  This proves the second and third implications of Lemma~\ref{lem:homol}.

However, the endpoints of $L$ are not necessarily vertices of $G$,
since they include the subdividing vertices.  Refer to Figure~\ref{fig:loopleveledges}.  Consider such a vertex
$v_\ell \in X'$ which was used to subdivide self-loop $\ell$.  Merge
any two consecutive edges $uv_\ell$, $v_\ell w$, creating edge $uw$
and minimally modify the embedding so that $uw$ does not intersect
$\ell$.  This maintains the second and third implications.
If there are parallel loops (either on an oriented or non-oriented surface), the connecting level edges consist of bigons between loop vertices; these bigons are null-homologous and hence can be disregarded. The set of
level edges may also have included a self-loop centered at a subdividing
vertex, the new ``edge'' will no longer have any endpoints.  This
``edge'' must bound a topological disk, since, if we introduced a
level edge centered at $v_\ell$, $\ell$ must have bounded a face in
$G''$.  Therefore, we can remove this ``edge'' while maintaining the same homology type for our set of cycles.  We let $L'$ be the modified and remaining edges.  These are the edges satisfying the three implications of Lemma~\ref{lem:homol}. \qed





 \begin{figure}[ht]
    \centering
    \includegraphics[width=1.5in]{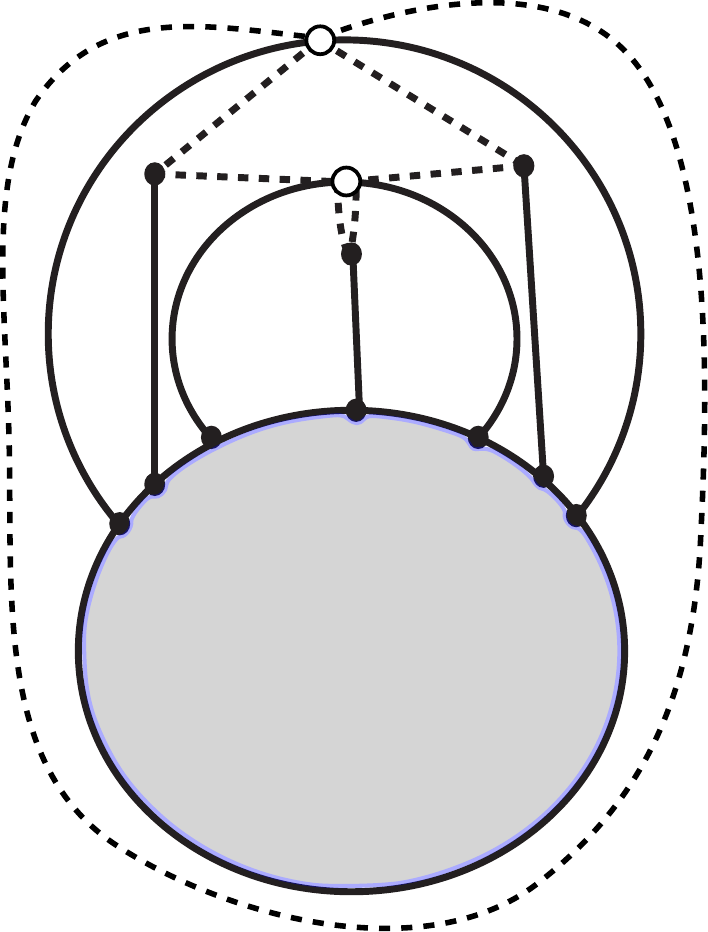}
    \includegraphics[width=1.5in]{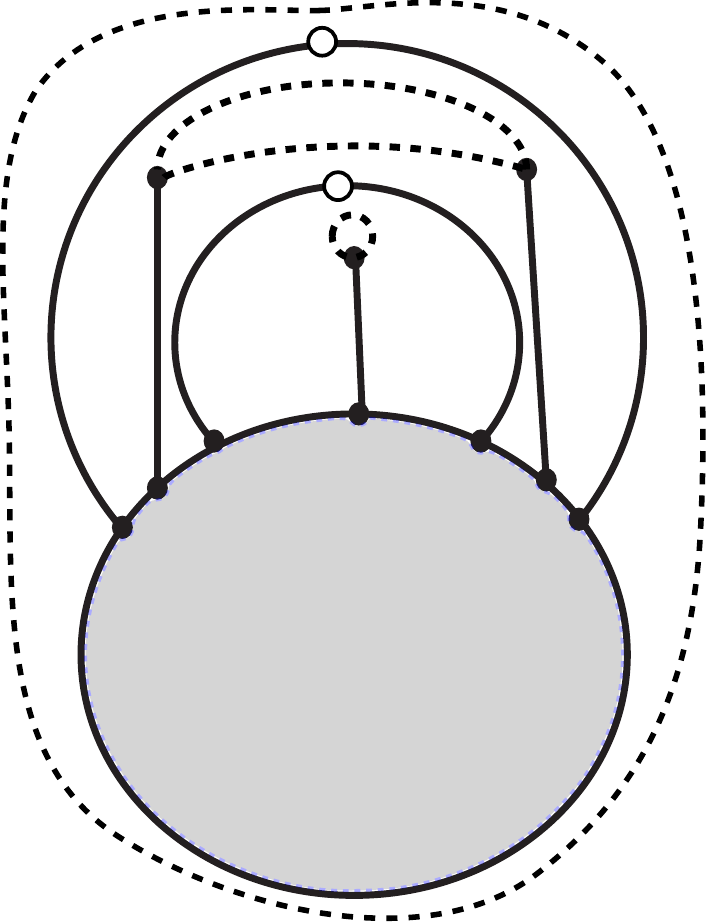}
    \caption{A face $f$ (shaded) and its incident edges (solid) with added subdividing vertices (hollow).  Left: the level edges $L$ (dashed) added to $G'''$.  Right: the level edges after connections to the subdividing vertices are removed.  Note that the outer endpoint-less ``edge'' is not included in $L'$.}
    \label{fig:loopleveledges}
  \end{figure}


\end{proof}

\subsection{Short non-separating cycles}

We are now ready to prove Theorem~\ref{thm:cycle}.

Let $C$ be the shortest non-separating cycle of $G$.  Cut open the
surface and graph along $C$, duplicating $C$ into copies $C_0$ and
$C_0'$; let $G_0$ be the cut open graph.  Glue a disk onto each hole
left from cutting open the graph.  $C_0$ and $C_0'$ are now the
boundaries of faces.

Let $V_i$ be the set of vertices in $G_0$ that are at distance $i$
from $C_0$ and let $s$ be the smallest index such that $V_s \cap
V(C_0') \ne \emptyset$.  We define sets of cycles $C_i$ in a graph
$G_i$, $i = 0 \ldots, s$, starting with $C_0$, inductively as follows:
Given the set of cycles $C_{i-1}$ that are the boundaries of faces
(and starting with $C_0$ as our initial cycle), we define $C_i$ to be
the homologous set of cycles going through $V_i$ as guaranteed by
Lemma~\ref{lem:homol}.  We remove the edges and vertices of $C_{i-1}$
and the edges adjacent to $C_{i-1}$ to make $C_i$ the boundaries of
faces.


  For any chord $uv$ of a face $f$, let $P_{uv}$ be the shortest
  $u$-to-$v$ path along the boundary of $f$ and let $\ell(uv) =
  |P_{uv}|$.  Gazit and Miller~\cite{GaMi90} show that for a face $f$
  and a set of pairwise non-crossing chords $H$ across $f$, $\ell(H) \le \frac{1}{8}|f|^2$.
  Since the edges of $\cup_{i = 1}^s C_i$ are chords of the faces of $G$, we get
  \begin{equation}
    \label{eq:level-wts}
    \sum_{i=0}^s \ell(C_i) \le {1\over 8} (||G||_f)^2
  \end{equation}

  By construction $C_i$ is homologous to $C_0$ and so to $C$.  Let
  $\bar C_i$ be the set of cycles obtained from $C_i$ by replacing
  each edge $uv \in C_i$ with $P_{uv}$.  We get $|\bar C_i| =
  \ell(C_i)$.  Since $\bar C_i$ is homologous to $C$, $\bar C_i$ must contain a non-separating cycle $S$.  Since $C$ is the shortest non-separating cycle,
  \[
  |\bar C_i| \ge |S| \ge |C|
  \]

  \begin{figure}[hb]
    \centering
    \includegraphics{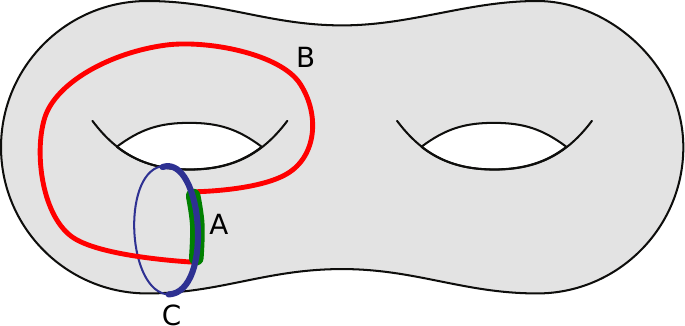}
    \caption{$C$ is the shortest non-separating cycle, $A \cup B$ is another non-separating cycle.}
    \label{fig:ABC}
  \end{figure}

  Let $B$ be the shortest path from $C_0$ to $C_0'$.  Let $A$ be the
  shortest subpath of $C$ that connects $B$'s endpoints.  $A \cup B$
  is a non-separating cycle.  See Figure~\ref{fig:ABC}  Since $|B| = s$ and $|A \cup B| \ge
  |C|$, $s \ge |C|/2$.  We have:
  \[
  {1\over 8}(||G||_f)^2 \ge \sum_{i=1}^s\ell(C_i) =  \sum_{i=1}^s |\bar C_i| \ge |C|^2/2 
  \]
  Rearranging gives Theorem~\ref{thm:cycle}.

\subsection{Planarizing sets}

Repeatedly cutting along non-separating cycles allows us to reduce a surface-embedded graph to a planar graph, while only reducing the face norm:

\begin{lemma}\label{lem:cut-n-contract}
  Let $G$ be an embedded graph and let $C$ be a non-separating cycle.
  Cutting open the graph along $C$ and then contracting each resulting
  copy of $C$ results in a graph $G'$ such that $||G'||_f <
  ||G||_f$.
\end{lemma}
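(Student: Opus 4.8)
The plan is to track the faces of $G$ through the cut-and-contract operation and show that the squared face-norm strictly drops. I would work with the combinatorial embedding of $G$ (a rotation system), as in the preceding lemmas. Cutting along the cycle $C$ duplicates its vertices and edges: at each vertex $w$ of $C$ the two edges of $C$ at $w$ split the rotation at $w$ into two arcs, $w$ is replaced by two vertices $w_0,w_0'$ carrying these arcs, and $C$ splits into two cycles $C_0,C_0'$. Contracting $C_0$ (resp.\ $C_0'$) then merges all of its vertices into one vertex $v_0$ (resp.\ $v_0'$), concatenating their rotations along the cycle, and removes the edges of $C$; so $E(G')=E(G)\setminus E(C)$, while $|V(G')|=|V(G)|-|C|+2$. (When $C$ is one-sided on a non-orientable surface the cut produces a single boundary circle double-covering $C$, which is contracted to one point; the bookkeeping below is unchanged.)

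The key step is to describe the faces of $G'$. I claim there is a bijection between the faces of $G$ and those of $G'$ under which a face $f$ whose boundary walk traverses edges of $C$ a total of $c_f\ge 0$ times (so $c_f=0$ for faces disjoint from $C$) maps to a face $f'$ with $|f'|=|f|-c_f$. Each traversal of a $C$-edge $e$ along $\partial f$ witnesses that $f$ lies on one side of $e$; after the surgery $e$ is gone and the two $C$-vertices flanking it on $\partial f$ have been absorbed into $v_0$ or $v_0'$, so that traversal shortens the walk of $f$ by exactly one and leaves the rest of $\partial f$ intact. No face of $G$ is destroyed, since $C$ is non-separating and hence bounds no disk, so no face of $G$ has boundary walk equal to $C$. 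No new face is created either: collapsing a boundary circle of a surface to a point only collapses the corresponding boundary arc of each face touching it, without separating or merging regions; equivalently, cutting leaves the Euler characteristic unchanged and each contraction raises it by one, so that Euler's formula together with the vertex and edge counts above forces $|F(G')|=|F(G)|$. (The degenerate cases --- $C$ of length $1$ or $2$, a $C$-edge traversed twice by one face, self-loops or bigons produced by the contraction --- are absorbed into exactly the same count, and any degenerate face of size $0$ contributes nothing to the norm.)

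Granting this, the estimate is immediate:
\[
||G||_f^2 - ||G'||_f^2 \;=\; \sum_{f:\, c_f\ge 1}\bigl(|f|^2-(|f|-c_f)^2\bigr) \;=\; \sum_{f:\, c_f\ge 1} c_f\bigl(2|f|-c_f\bigr).
\]
Because $c_f$ counts only a subset of the edge-traversals of $\partial f$, we have $|f|\ge c_f$, so every summand is at least $c_f(2c_f-c_f)=c_f^2\ge 1$; and the sum ranges over at least one face, since every edge of $C$ lies on the boundary of some face. Hence $||G'||_f^2<||G||_f^2$, and taking square roots gives the lemma. The only delicate point is the face bookkeeping of the second paragraph --- establishing the bijection and the identity $|f'|=|f|-c_f$ while accounting for the degeneracies above; everything else reduces to the one-line norm estimate displayed here.
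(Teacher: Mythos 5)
Your proof is correct and follows essentially the same route as the paper's: both track the faces through the cut-and-contract surgery, observe that faces not touching $C$ are unchanged while each face meeting $C$ strictly shrinks (you quantify this as $|f'|=|f|-c_f$), and conclude that the sum of squared face sizes strictly decreases. Your write-up simply makes explicit the face bijection, the Euler-characteristic check, and the degenerate cases that the paper's proof asserts without detail.
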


\begin{proof}
  Let $\cal F$ be the set of faces of $G$ and let ${\cal F}_C$ be the
  set of faces of $G$ that have a bounding edge in $C$.  Cutting along
  a non-separating cycle $C$ of a graph $G$ embedded on surface $\cal
  S$ introduces two holes, each bounded by a copy of $C$.  Contracting
  each hole and each bounding copy of $C$ results in a graph $G'$ with
  face set ${\cal F}'$.  Every face in ${\cal F}$ maps to a face in
  ${\cal F}'$ such that the faces in ${\cal F} \setminus {\cal F}_C$
  are the same size as their image in ${\cal F}'$ and the faces in
  ${\cal F}_C$ are strictly larger than their counterparts ${\cal F}'_C$ in ${\cal
    F}'$ giving:
  \[
  ||G'||_f = \sqrt{\sum_{f \in {\cal F}'} |f|^2 }
 = \sqrt{ \sum_{f \in {\cal F}'\setminus{\cal F}'_C} |f|^2 +\sum_{f \in {\cal F}'_C} |f|^2 } < \sqrt{ \sum_{f \in {\cal F}\setminus{\cal F}_C} |f|^2 +\sum_{f \in {\cal F}_C} |f|^2 } = \sqrt{\sum_{f \in {\cal F}} |f|^2 }= ||G||_f \] \qed
\end{proof}

Cutting along a non-separating cycle $C$ of a graph $G$ embedded on
surface $\cal S$ reduces the genus of the surface by one and
introduces two holes, each bounded by a copy of $C$.
Lemma~\ref{lem:cut-n-contract} shows that if we contract the two
copies of $C$ (and the corresponding holes), we only reduce the
face-norm of the graph.  We can repeat this cut-and-contract procedure
$g$ times, each time we find a non-separating cycle of length at most
${1\over 2}||G||_f$, at which point the surface is a sphere and the
final graph $G'$ is planar.  Of course, applying this method to the
dual $G^*$ of the graph, results in a set of {\em planarizing} edges
whose size is measured in terms of the vertex-norm
\[
||G||_\delta = \sqrt{\sum_{v \in V} \delta(v)^2}
\]
of $G$ where $\delta(v)$ is the degree of vertex $v$.  Recall that the
dual of a plane graph is given by a vertex for every face of the primal
graph, with dual vertices connected when the corresponding primal
faces are adjacent.  By duality, the degree of a vertex is the size of
the face in the dual corresponding to the vertex.  We get:
\begin{lemma}\label{lem:planarize}
  There is a set of $\frac{g}{2}||G||_\delta$ edges of a genus-$g$
  graph whose removal leaves a planar graph.
\end{lemma}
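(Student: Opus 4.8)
The plan is to dualize the cut-and-contract argument that was just described for the face-norm. Recall that Theorem~\ref{thm:cycle} together with Lemma~\ref{lem:cut-n-contract} shows that a genus-$g$ graph $G$ admits a sequence of $g$ non-separating cycles, each of length at most $\frac{1}{2}\|G\|_f$, whose contraction (after cutting open) reduces the genus by one each time, so that after $g$ steps we obtain a planar graph; moreover the contraction steps never increase the face-norm, so each of the $g$ cycles we pick along the way has length at most $\frac{1}{2}\|G\|_f$ measured in the original graph's face-norm. First I would apply this entire machinery to the dual graph $G^\ast$ rather than to $G$ itself. Since $G$ has genus $g$, so does $G^\ast$ (a cellular embedding of $G$ on a surface induces a cellular embedding of $G^\ast$ on the same surface). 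Thus $G^\ast$ has a sequence of $g$ non-separating cycles $C_1^\ast,\dots,C_g^\ast$, each of length at most $\frac{1}{2}\|G^\ast\|_f$, whose cut-and-contract removal planarizes $G^\ast$.

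The second step is to translate this back through duality. The faces of $G^\ast$ are exactly the vertices of $G$, and the size of the face of $G^\ast$ corresponding to a vertex $v$ is the degree $\delta(v)$; hence $\|G^\ast\|_f = \sqrt{\sum_v \delta(v)^2} = \|G\|_\delta$. So each cycle $C_i^\ast$ in $G^\ast$ has at most $\frac{1}{2}\|G\|_\delta$ edges. An edge of $G^\ast$ corresponds to an edge of $G$, so the union $\bigcup_{i=1}^g C_i^\ast$ corresponds to a set $L$ of at most $g\cdot\frac{1}{2}\|G\|_\delta = \frac{g}{2}\|G\|_\delta$ edges of $G$. The final step is to observe that deleting an edge of $G$ corresponds, in the dual, to contracting the corresponding edge of $G^\ast$ — and contracting a non-separating cycle (after cutting along it) is precisely the operation that drops genus by one. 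So deleting the edges of $L$ from $G$ has the same effect on the surface as performing the $g$ cut-and-contract steps on $G^\ast$, namely it realizes an embedding of $G - L$ on the sphere; that is, $G - L$ is planar.

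The main obstacle is making the duality bookkeeping precise, and in particular being careful that the cut-and-contract operations on $G^\ast$ really do correspond to plain edge deletions on $G$. Cutting $G^\ast$ open along $C_i^\ast$ duplicates those edges, and then contracting the two copies identifies each pair of endpoints; dually this is deleting the corresponding primal edges. One must check that after such a deletion the dual of the modified primal graph is exactly the cut-and-contracted $G^\ast$, so that the inductive hypothesis applies to the next cycle $C_{i+1}^\ast$, and that the ``at most $\frac{1}{2}\|G\|_\delta$'' bound on each $C_i^\ast$ survives the earlier deletions — which follows because, exactly as in Lemma~\ref{lem:cut-n-contract}, each cut-and-contract step only decreases the face-norm of $G^\ast$, hence only decreases $\|G\|_\delta$ of the corresponding primal graph. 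A minor point worth a sentence is that the bound stated is $\frac{g}{2}\|G\|_\delta$ on the original degree sequence, so we should note that the per-step bound of $\frac{1}{2}\|G'\|_\delta \le \frac{1}{2}\|G\|_\delta$ is what lets us sum $g$ terms against the original norm. With these checks in place the lemma follows immediately.
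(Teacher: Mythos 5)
Your proposal is correct and follows essentially the same route as the paper: apply the cut-and-contract procedure of Theorem~\ref{thm:cycle} and Lemma~\ref{lem:cut-n-contract} to the dual $G^\ast$, use $\|G^\ast\|_f = \|G\|_\delta$, and translate the $g$ dual non-separating cycles back to primal edge deletions. Your added bookkeeping about the per-step norm bound surviving earlier steps is exactly the point the paper handles (tersely) via Lemma~\ref{lem:cut-n-contract}.
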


\section{Pairwise-crossing number of surfaces}\label{sec:cross}

There are many measures of how close a graph is to being planar.  One measure is the crossing number which is the minimum number of edge crossings in a planar, topological drawing of the graph~\cite{Turan77}.  A drawing is {\em topological} if vertices map to distinct points and edges map to simple Jordan arcs connecting the points their endpoints such that (i) no arc passes through a vertex different from its endpoints, (ii) no two arcs meet in more than one point, and (iii) no three arcs share a common interior point.  Formally the crossing number of a fixed drawing is number of interior points that are shared by two arcs.  The restriction to topological drawings does not increase the crossing number of a graph, see e.g.~\cite{Felsner04}.  Rather than planar drawings, we are interested in drawings on surfaces of genus $g$ and so will refer to {\em surface topological drawings}.  This number has been studied by Shahrokhi, Sz\'{e}kely, S\'{y}kora and Vrt'o, who give upper and lower bounds on the crossing number of complete graphs drawn on compact 2-manifolds~\cite{SSSV94,SSSV96}; more specific bounds are also known for surfaces such as the torus~\cite{Guy1968376}.

We first use the crossing number of a particular drawing of a graph to give bounds on the size of a set of edges whose removal results in  a topological drawing in the plane.

\begin{lemma}\label{lem:topo-planarize}
A graph $G$ admitting  a topological drawing on a surface $\cal S$ of genus $g$ with $\chi$ crossings has a subset of at most 
\[ {g \over 2}\sqrt{16\chi+||G||_\delta^2}\] edges whose removal
leaves a graph whose inherited drawing is a planar topological
drawing.
\end{lemma}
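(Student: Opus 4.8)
The plan is to reduce the topological-drawing setting to the embedded-graph setting of Lemma~\ref{lem:planarize} by turning crossings into vertices. Given a topological drawing of $G$ on $\cal S$ with $\chi$ crossings, I would first form the \emph{planarization-on-the-surface} $\hat G$: introduce a new degree-$4$ vertex at each of the $\chi$ crossing points, splitting the two arcs that meet there into two sub-arcs each. The resulting drawing of $\hat G$ has no crossings, so it is a genuine embedding on $\cal S$ (after possibly passing to a cellular embedding on a surface of genus at most $g$, which only helps). Now $\hat G$ has genus at most $g$, so Lemma~\ref{lem:planarize} gives a set $L$ of at most $\frac{g}{2}\|\hat G\|_\delta$ edges of $\hat G$ whose removal makes $\hat G$ planar.

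The second step is to translate $L$ back into a set of edges of the original graph $G$. Each edge of $\hat G$ is a sub-arc of some edge $e$ of $G$; removing it removes (a piece of) $e$, which removes the whole arc $e$ from the drawing. So I define $L'$ to be the set of edges $e\in E(G)$ such that at least one sub-arc of $e$ lies in $L$. Then $|L'|\le |L|\le \frac{g}{2}\|\hat G\|_\delta$, and removing $L'$ from the drawing of $G$ yields a subdrawing whose planarization is a subgraph of the planar graph $\hat G - L$; hence the inherited drawing of $G-L'$ has no crossings and is a planar topological drawing (conditions (i)--(iii) are inherited from the original drawing, since deleting arcs cannot create new incidences).

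The remaining work is the arithmetic bounding $\|\hat G\|_\delta$ in terms of $\|G\|_\delta$ and $\chi$. The degree sequence of $\hat G$ is that of $G$ together with $\chi$ additional entries equal to $4$, so
\[
\|\hat G\|_\delta^2 = \sum_{v\in V(G)} \delta(v)^2 + \sum_{\text{crossings}} 4^2 = \|G\|_\delta^2 + 16\chi,
\]
which gives $\frac{g}{2}\|\hat G\|_\delta = \frac{g}{2}\sqrt{16\chi + \|G\|_\delta^2}$, exactly the claimed bound. (Subdividing an edge once at a crossing replaces a single edge-endpoint contribution at each of its two endpoints by the same contribution, so the only change to the degree sequence is the new degree-$4$ vertices; the original vertices keep their degrees.)

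The main obstacle I anticipate is purely topological bookkeeping rather than counting: one must be careful that the drawing of $\hat G$ obtained by inserting crossing-vertices is an \emph{embedding} to which Lemma~\ref{lem:planarize} (via Theorem~\ref{thm:cycle} and Lemma~\ref{lem:cut-n-contract}) legitimately applies — in particular that it can be taken cellular on a surface of genus at most $g$ and that the cut-and-contract planarization of $\hat G$ indeed corresponds to deleting arcs of the original drawing, not some more exotic surgery. Condition (iii) of a topological drawing (no three arcs through a common point) is what guarantees each crossing-vertex has degree exactly $4$, and condition (ii) guarantees that distinct edges of $G$ contribute disjoint sets of crossing-vertices; both are needed for the degree count above to be clean. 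Once these points are checked the proof is immediate.
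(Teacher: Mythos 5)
Your proposal is correct and follows essentially the same route as the paper: introduce a degree-$4$ vertex at each crossing to obtain an embedded graph $H$ with $\|H\|_\delta^2 = 16\chi + \|G\|_\delta^2$, apply Lemma~\ref{lem:planarize} to get a planarizing set of at most $\frac{g}{2}\|H\|_\delta$ edges, and map each such edge back to the original edge of $G$ containing it, noting the set can only shrink. Your write-up is in fact somewhat more careful than the paper's (checking that the inherited drawing of $G$ minus the lifted set is indeed a planar topological drawing), but the argument is the same.
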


\begin{proof}Let $H$ be the graph embedded on $\cal S$ obtained
from $G$ by introducing a vertex at each crossing.  Since the drawing is topological, each of these new vertices has degree 4.  We have that $||H||_\delta^2 = \sum_{v\in H} \delta_H(v)^2 = 16\chi+\sum_{v\in G} \delta_G(v)^2$.  By Lemma~\ref{lem:planarize}, $H$ has a planarizing edge set $S_H$ with at most ${g\over 2}||H||_\delta$ edges.  Let $S_G$ be the set of edges of $G$ from which $S_H$ are generated.  Since $|S_G| \le |S_H|$, the lemma follows.
\hfill \qed \end{proof}

Another class of graphs that is close to being planar are the class of
{\em $k$-quasi-planar graphs}.  A graph is $k$-quasi-planar if it
admits a planar, topological drawing in which no subset of $k+1$ edges
pairwise cross; thus a graph that is 1-quasi-planar is planar.
Various bounds on the number of edges in such graphs have been
given~\cite{AAPPS97,PSS94,FPS13}, culminating in:
\begin{theorem}[Suk and Walczak~\cite{SW13}]\label{thm:planar-cross}
  A simple $n$-vertex graph admitting a topological drawing in the
  plane in which no subset of $k+1$ edges pairwise cross has at most
  $c_k n \log n$ edges where $c_k$ is a constant depending only on $k$.
\end{theorem}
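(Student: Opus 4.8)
The plan is to prove the bound by induction on $k$. When $k=1$ the hypothesis says that no two edges of the drawing cross, so the drawing is a planar embedding and $|E(G)|\le 3n-6$; this is the base case. For the inductive step I would fix a topological drawing of $G$ witnessing that no $k+1$ of its edges pairwise cross, write $m=|E(G)|$, and aim for a divide-and-conquer recursion of the shape $f(n)\le f(n_1)+f(n_2)+O_k(n)$, where $n_1+n_2=n$ and $n_1,n_2\le\tfrac{2}{3}n$; such a recurrence solves to $f(n)=O_k(n\log n)$. Thus the whole argument reduces to a single structural claim: one can delete $O_k(n)$ edges of $G$ so that the remaining drawing falls into two roughly balanced pieces spanning disjoint vertex sets, and then recurse on each piece --- each being a topological graph that still has no $k+1$ pairwise crossing edges.

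To obtain such a sparse balanced cut I would pass to the \emph{crossing graph} $\Gamma$ of the drawing: its vertices are the edges of $G$, and two are adjacent precisely when the corresponding Jordan arcs cross. The hypothesis is exactly that $\Gamma$ contains no clique on $k+1$ vertices, and since the edges of $G$ are curves drawn in the plane, $\Gamma$ is a string graph. The idea is then to run a separator theorem for string graphs in the spirit of Fox--Pach and Matou\v{s}ek on $\Gamma$ (or on the drawing directly), and to translate a balanced separation of the edge curves into a sparse balanced cut of $G$: two $G$-edges that do not cross and whose endpoints lie on opposite sides of a fixed simple closed curve can be counted or re-routed in a controlled way, so the edges actually straddling the cut are confined to those meeting the separator. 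The bounded clique number of $\Gamma$ is what one leans on --- together with the inductive hypothesis for $k-1$ applied to the edges crossing a fixed arc of the separator, which among themselves contain no $k$ pairwise crossing edges --- to drive the number of straddling edges down to $O_k(n)$ rather than merely $o(m)$.

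The step I expect to be the main obstacle is precisely this: arranging that the cut can be taken with only $O_k(n)$ edges across it. Each naive tool falls a little short --- the crossing lemma only produces an edge crossed $\Omega(m^2/n^2)$ times, which after recursion gives an $n^{3/2}$-type bound; the Lipton--Tarjan separator for a graph drawn with $\chi$ crossings has size $O(\sqrt{n+\chi})$, useless once $\chi$ is quadratic; and a raw string-graph separator on $\Gamma$ bounds only the number of \emph{crossings} one removes, not the number of \emph{edges}. Bridging this gap is the technical heart of the Fox--Pach--Suk and Suk--Walczak arguments: a careful recursive charging scheme that exploits the quasi-planar structure --- that $\Gamma$ is not an arbitrary string graph but a $K_{k+1}$-free one, and that the straddling edges inherit an almost-linear structure along the separating curve --- to make the boundary term linear, whereupon the recurrence $f(n)=2f(2n/3)+O_k(n)$ delivers the claimed $c_k n\log n$ bound. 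For the purposes of this paper we are content to cite the result, but the sketch above is the route along which I would attempt to reprove it.
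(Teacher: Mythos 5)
The paper offers no proof of this statement at all: Theorem~\ref{thm:planar-cross} is imported as a black box from Suk and Walczak~\cite{SW13} (building on \cite{PSS94,FPS13}), so there is no in-paper argument to compare yours against, and ultimately you too conclude that citing is the right course. Judged as a proof attempt, however, your sketch has a genuine gap, and it is the one you flag yourself: the entire content of the theorem sits in the claim that one can delete $O_k(n)$ edges so that the drawing splits into two balanced pieces on disjoint vertex sets. A separator theorem applied to the crossing graph $\Gamma$ (whose vertices are the \emph{edges} of $G$) produces a balanced separation of the edge set, not of the vertex set, and converting that into a balanced vertex bipartition crossed by only linearly many edges is exactly the recursive charging machinery of Fox--Pach(--Suk) and Suk--Walczak that you defer to; nothing in the sketch supplies it, and without it the induction on $k$ never closes. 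Indeed, the purely separator-based route is known to yield only bounds of the shape $n(\log n)^{O(\log k)}$ for $k$-quasi-planar graphs, which is precisely why driving the boundary term down to $O_k(n)$ is the heart of \cite{SW13} rather than a routine application of string-graph separators.

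Two smaller points. First, the recurrence you state at the end, $f(n)=2f(2n/3)+O_k(n)$, solves to $\Theta\left(n^{\log_{3/2}2}\right)\approx n^{1.71}$, not $n\log n$; what you need, and what you correctly wrote earlier, is $f(n)\le f(n_1)+f(n_2)+O_k(n)$ with $n_1+n_2\le n$ and $n_1,n_2\le\frac{2}{3}n$, whose level sums are $O_k(n)$ over $O(\log n)$ levels. Second, be aware that the $c_kn\log n$ bound in \cite{SW13} is proved for \emph{simple} topological graphs (any two edges meeting at most once), which matches the drawing conventions this paper imposes, but is a hypothesis your sketch should state explicitly since the argument for edges crossing many times is different and gives a weaker bound. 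Given all this, presenting the sketch as a roadmap of the cited proof while invoking the theorem by citation, as the paper does, is the appropriate resolution.
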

In fact, if one
follows the dependence on $k$ through Suk and Walczak's work, one finds that
\begin{equation}
c_k = A^{k^k} \text{ for a fixed constant }A \ge 2\label{eq:ck}
\end{equation}
As far
as we know, such bounds have not previously been generalized to more
general surface topological drawings as we do so here.  The proof of Theorem~\ref{thm:cross} is based on the analysis technique of Pach
et~al.~\cite{PSS94}, but here we are able to immediately reduce the genus $g$ topological graph to a planar, topological graph, thus invoking Suk and Walczak's result~\cite{SW13}.

\begin{theorem} \label{thm:cross} A simple $n$-vertex graph admitting
  a topological drawing on a surface of genus $g > 0$ in which
  no subset of $k+1$ edges pairwise cross has at most 
  $(2g^2)^k c_k n \log n$ when $g = O(n)$.
\end{theorem}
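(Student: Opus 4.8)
The plan is to reduce the genus-$g$ topological drawing to a planar topological drawing by deleting a controlled number of edges, then invoke Theorem~\ref{thm:planar-cross} on what remains. First I would bound the number of crossings $\chi$ in the given drawing: since no $k+1$ edges pairwise cross, the drawing is (a surface version of) $k$-quasi-planar, and a standard counting argument bounds $\chi$ in terms of $n$, $k$ and $g$ — roughly, every crossing can be charged to a set of at most $k$ pairwise-crossing edges, and one controls the total using the edge bound being proved, giving a self-referential inequality. Concretely, I expect to argue that if $G$ had more than $(2g^2)^k c_k n\log n$ edges, then it would have "too many" crossings to be planarized cheaply, contradicting the quasi-planarity hypothesis via Theorem~\ref{thm:planar-cross}.

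The cleaner route, and the one the excerpt is clearly setting up, is to use Lemma~\ref{lem:topo-planarize}: the genus-$g$ drawing with $\chi$ crossings has a set of at most $\frac{g}{2}\sqrt{16\chi + \|G\|_\delta^2}$ edges whose removal yields a \emph{planar} topological drawing $G'$. That planar drawing still has no $k+1$ pairwise-crossing edges, so by Theorem~\ref{thm:planar-cross} it has at most $c_k n\log n$ edges. Hence $|E(G)| \le c_k n\log n + \frac{g}{2}\sqrt{16\chi+\|G\|_\delta^2}$. Since $\|G\|_\delta^2 \le (\max_v \delta(v))\cdot\sum_v \delta(v) \le 2n\cdot 2|E(G)|$ and, more usefully, $\|G\|_\delta^2 \le 2|E(G)|\cdot(n-1)$, and since $\chi$ is itself bounded by the number of pairs of edges that can cross — here one uses that each edge crosses each other edge at most once in a topological drawing — we get $\chi \le \binom{|E(G)|}{2}$, but that bound is too weak. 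So the key step is a better bound on $\chi$ coming from quasi-planarity: one shows $\chi = O(k \cdot |E(G)| \cdot \text{(something)})$, or more directly one bounds the number of crossings on a single edge. I would instead bound $\chi$ edge-locally: redraw/reroute so that the planarizing edge set from Lemma~\ref{lem:topo-planarize} absorbs the genus, and iterate the cut-and-contract idea $g$ times, each time a non-separating curve in the "crossing-augmented" graph $H$ has length at most $\frac12\|H\|_f$, i.e. $\frac12\sqrt{16\chi + \|G\|_\delta^2}$ in primal terms.

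Carrying it out: form $H$ from $G$ by placing a degree-$4$ vertex at each of the $\chi$ crossings, so $H$ is genuinely embedded on the surface and $\|H\|_\delta^2 = 16\chi + \|G\|_\delta^2$; apply Lemma~\ref{lem:planarize} to get a planarizing set of at most $\frac{g}{2}\|H\|_\delta$ edges of $H$, pull back to at most that many edges $S_G$ of $G$, delete them, and observe $G - S_G$ now draws planarly and topologically with no $k+1$ pairwise crossing edges, so Theorem~\ref{thm:planar-cross} bounds $|E(G)| - |S_G|$ by $c_k n\log n$. Finally I bound $\chi$ itself: in the planar drawing of $G-S_G$ there are at most $c_k n\log n$ edges and hence at most $(c_k n\log n)^2$ crossings; every crossing of $G$ not in $G-S_G$ involves one of the at most $\frac{g}{2}\|H\|_\delta$ deleted edges, and a careful accounting (again using that no $k+1$ edges pairwise cross, so the crossings on any edge form a $k$-quasi-planar subpicture) lets me close the recursion and conclude $\chi = O((g^2)^k c_k^2 n^2 \log^2 n)$ — too lossy — so in fact I expect the real argument bounds crossings \emph{per deleted edge} rather than globally, yielding $|E(G)| \le (2g^2)^k c_k n\log n$ after absorbing constants and using $g = O(n)$ to dominate lower-order terms.

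The main obstacle is the self-referential bound on $\chi$: the planarizing-set size depends on $\chi$, and $\chi$ depends on the edge count, which is what we are trying to bound. Resolving this requires either (a) a direct, crossing-count-free reduction — embedding $H$ on the surface genuinely and applying Lemma~\ref{lem:planarize} to $H$ with its \emph{face} structure rather than to a drawing — so that the factor that appears is $(g/2)\cdot(\text{number of planarizing edges of }H)$ and this is controlled purely combinatorially by the genus of $H$'s embedding (which is at most $g$), not by $\chi$; or (b) a bootstrapping argument where one first proves a crude polynomial edge bound, substitutes it to bound $\chi$, and re-runs the planarization to get the final $c_k n\log n$-type bound. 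I expect the paper takes route (a) via Lemma~\ref{lem:topo-planarize} applied with the observation that $H$ embeds on the \emph{same} surface of genus $g$, so the genus never grows and the $(2g^2)^k$ factor comes from iterating the quasi-planar-edge-count bound through the $g$-fold cut-and-contract, each step multiplying the allowable edge count by a factor of $2g^2$.
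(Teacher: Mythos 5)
You have the right frame---form $H$ by placing a degree-$4$ vertex at each crossing, planarize via Lemma~\ref{lem:topo-planarize}, and apply Theorem~\ref{thm:planar-cross} to the planar remainder---and you correctly identify the obstacle: the planarizing-set size depends on $\chi$, while $\chi$ depends on the edge count being bounded. But you never resolve this, and both escape routes you sketch fail. Route (a) is not available: the bound in Lemma~\ref{lem:topo-planarize} is $\frac{g}{2}\sqrt{16\chi+\|G\|_\delta^2}$, and the $16\chi$ term cannot be argued away by observing that $H$ embeds on the same genus-$g$ surface---the crossing vertices genuinely inflate $\|H\|_\delta$---and the $(2g^2)^k$ factor does not come from iterating anything through the $g$-fold cut-and-contract (that iteration is already encapsulated inside Lemma~\ref{lem:planarize}). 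Route (b), bootstrapping from a crude global bound, does not contract: with only $\chi\le\binom{m}{2}$ the planarizing set has size roughly $\sqrt{2}\,g\,m\ge m$, so the inequality $m\le c_k n\log n+|S|$ yields nothing.

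The missing idea, which is the heart of the paper's proof (following Pach et al.), is an induction on $k$ that bounds crossings \emph{per edge} rather than globally. For a fixed edge $e$, the edges crossing $e$ inherit a genus-$g$ drawing with no $k$ pairwise crossing edges (otherwise, together with $e$, there would be $k+1$), so by the inductive hypothesis at level $k-1$ at most $m_{g,k-1,n}\le(2g^2)^{k-1}c_{k-1}n\log n$ edges cross $e$; summing over edges gives $\chi\le\frac{1}{2}m\cdot(2g^2)^{k-1}c_{k-1}n\log n$. Plugging this into Lemma~\ref{lem:topo-planarize} (together with $\|G\|_\delta^2\le 2mn$) makes the planarizing set have size $O\bigl(g\sqrt{m\,(2g^2)^{k-1}c_{k-1}n\log n}\bigr)$---crucially proportional to $\sqrt{m}$, not to $m$---so combining with the Suk--Walczak bound on $G\setminus S$ yields $m\le c_k n\log n+\frac{3g}{2}\sqrt{m\,(2g^2)^{k-1}c_{k-1}n\log n}$, a quadratic inequality in $\sqrt{m}$ that solves to $m\le(2g^2)^k c_k n\log n$ using the growth of $c_k$ from Equation~(\ref{eq:ck}); this is where the factor $2g^2$ per level of $k$ actually arises. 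You gestured at exactly this when you noted that the crossings on any single edge form a lower-order quasi-planar picture, but you dismissed it as ``too lossy'' instead of setting up the induction on $k$ and solving the resulting inequality; without that step the proof does not close.
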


\begin{proof}
  Let ${\cal G}_{g,k,n}$ be the family of all graphs with at most $n$
  vertices and admitting a genus-$g$ topological drawing in which no
  subset of $k+1$ edges pairwise cross.  Let $m_{g,k,n}$ be the
  maximum number of edges in any graph in ${\cal G}_{g,k,n}$.

  We aim to prove the assertion for ${\cal G}_{g,k,n}$ that 
  \begin{equation}
    \label{eq:2}
    m_{g,k,n} \le (2g^2)^{k}c_k n \log n
  \end{equation}
  by induction over $k$.  For $k = 1$ (and every $g$ and $n$), the
  assertion is true since such graphs are genus-$g$ graphs and have
  $O(n + g)$ edges which is $O(n)$ for $g = O(n)$.  For values of $n$
  such that $n\log n \le (2g^2)^{k}c_k$, the assertion is
  true since the right-hand side of Inequality~(\ref{eq:2}) exceeds $n^2$ for all
  such values of $n$.  We assume that $m_{g,k-1,n} \le
  (2g^2)^{k-1}c_{k-1} n \log n$.

  Consider a graph $G \in {\cal G}_{g,k,n}$ and fix a genus-$g$ topological
  drawing of $G$ in which no subset of $k+1$ edges pairwise cross.
  Let $\chi$ be the number of crossings in this drawing.  We first bound $\chi$ so we may use
  Lemma~\ref{lem:topo-planarize}.

  Consider an edge $e$ of $G$ and let $G_e$ be the subgraph of $G$
  consisting of all the edges crossing $e$.  Let $G_e$ inherit its
  drawing from $G$.  Since the drawing of $G$ has no $k+1$ pairwise
  crossing edges, the drawing of $G_e$ has no $k$ pairwise crossing
  edges for otherwise such a set along with $e$ would witness a set of
  $k+1$ pairwise crossing edges in the drawing of $G$.  Therefore $G_e
  \in {\cal G}_{g,k-1,n}$ and so $G_e$ has at most $m_{g,k-1,n}$
  edges.  The number of crossings on $e$ is therefore at most
  $m_{g,k-1,n}$.  Summing over all edges of $G$, $\chi \le
  \frac{1}{2}m \cdot m_{g,k-1,n}$ where $m$ is the number of edges in $G$.  By the inductive hypothesis,
   \begin{equation}
     \label{eq:3}
     \chi \le \frac{1}{2} m\cdot (2g^2)^{k-1}c_{k-1} n \log n.
   \end{equation}
   
   Let $S$ be the set of edges forming a planarizing set for $G$ guaranteed by Lemma~\ref{lem:topo-planarize}.  By
   Lemma~\ref{lem:topo-planarize}, Equation~\eqref{eq:3} and the fact
   that $||H||_\delta^2 \le 2|E(H)|\cdot|V(H)|$ for any graph $H$,
   \begin{equation}
     \label{eq:4}
     |S| \le \frac{g}{2} \sqrt{8 m\cdot (2g^2)^{k-1}c_{k-1} n \log n+2mn} \le \frac{3g}{2} \sqrt{ m\cdot (2g^2)^{k-1}c_{k-1} n \log n}
   \end{equation}
   where the last inequality holds for $n$ such that $2 < (2g^2)^{k-1}c_{k-1}\log n$; these coincide with non-base-case values of $n$. 
    Let $G'$ be
   the graph obtained by deleting $S$ from $G$.  Then $m \le E(G') +
   |S|$.  Since $G'$ is a $k$-quasi-planar graph on at most $n$
   vertices, $|E(G')| \le c_k n \log n$ by
   Theorem~\ref{thm:planar-cross}.  Combining, we get
   \begin{equation*}
     m \le c_k n \log n + \frac{3g}{2} \sqrt{ m\cdot (2g^2)^{k-1}c_{k-1} n \log n}
   \end{equation*}
   Rearranging:
   \begin{equation} \label{eq:f}
     m - \frac{3g}{2}\sqrt{(2g^2)^{k-1}c_{k-1} n \log n} \sqrt{m} \le c_k n \log n 
   \end{equation}
   Let $f(m) = m - \frac{3g}{2}\sqrt{(2g^2)^{k-1}c_{k-1} n \log n}
   \sqrt{m}$. We consider the two cases corresponding to the sign of
   the left-hand side of (\ref{eq:f}).

   If $f(m) \le 0$, then
   \[m \le \left(\frac{3g}{2}\right)^2 (2g^2)^{k-1}c_{k-1} n \log n =
   (2g^2)^k \frac{9}{8}c_{k-1} n \log n \le (2g^2)^kc_k n \log n,\]
   where the last inequality follows from ${9 \over 8} c_{k-1} < c_k$
   (which is clearly true given Equation~(\ref{eq:ck})), thus proving
   the assertion.
   
   We note that $f(m)$ is an increasing function for all positive
   values of $m$ such that $f(m) > 0$.  
We will show that
   \begin{equation}
f((2g^2)^{k}c_k n \log n) > c_{k} n \log n, \label{eq:ff}
\end{equation}
implying that $m < (2g^2)^{k}c_k n \log n$ when $f(m) > 0$, proving
the assertion.
\begin{eqnarray*}
  f((2g^2)^{k}c_k n \log n) &=& (2g^2)^{k}c_k n \log n -\frac{3g}{2}\sqrt{(2g^2)^{k-1}c_{k-1} n \log n} \sqrt{(2g^2)^{k}c_k n \log n} \\
  & = & (2g^2)^{k}c_k n \log n - \frac{3\sqrt{2}}{4}\sqrt{(2g^2)^{2k}c_{k-1}c_{k}}n\log n \\
  & = & (2g^2)^{k}c_k n \log n  \left(1-\frac{3\sqrt{2}}{4}\sqrt{\frac{c_{k-1}}{c_k}}\right)\\
  & > & (2g^2)^{k}c_k n \log n  \left(1-\frac{3\sqrt{2}}{4}\frac{1}{\sqrt{2}}\right)\text{, since $c_k > 2 c_{k-1}$, by Equation~(\ref{eq:ck})}\\
  & = & (2g^2)^{k}c_k n \log n  \left(\frac{1}{4}\right)\\
  & > & c_k n \log n \mbox{, for $k \ge 2$ and $g \ge 1$}
 \end{eqnarray*}
This proves Equation~(\ref{eq:ff}) and so the theorem.
\hfill \qed \end{proof}

\section{The $(p_g,q_g)$-property of genus-$g$ ball systems}\label{sec:pq}

The proof of the fact that the ball system of a graph of genus $g$ has the $(p_g,q_g)$-property is similar to the proof of Proposition 2 in the work of Chepoi, Estellon and Vax\`{e}s~\cite{planarballcover}, although we have made efforts to simplify the proof here.

Let $G$ be a graph of diameter at most $2R$ with an embedding on a
surface $\cal S$ of genus $g$.  Let $C$ be a set of $p_g$ vertices; we
will define $p_g$ shortly.  Consider a set of shortest paths ${\cal P}
= \{P_{ij}\ : \ c_i, c_j \in C\}$ where $P_{ij}$ is the shortest
$c_i$-to-$c_j$ path in $G$.  We can assume, without loss of
generality, that the intersection of any two of these paths is {\em simple}, having at
most one component (a path or vertex), for otherwise, one path could
be redirected along another without compromising shortness as
illustrated in Figure~\ref{fig:simple}.

\begin{figure}[t]
  \centering
  \label{fig:simple}
  \includegraphics{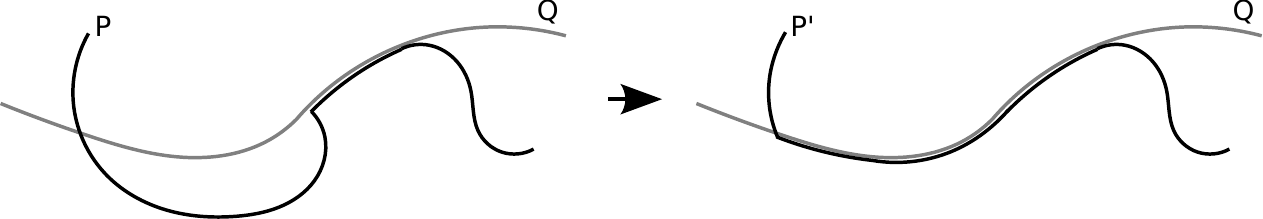}
  \caption{If $P$ and $Q$ are both shortest paths, then $P'$ must also be a shortest path between $P$'s endpoints.}
\end{figure}

 Taking the image of $P_{ij}$ on the surface for each path $P_{ij} \in {\cal
 P}$, we get a drawing of the complete graph $K_{p_g}$ on $\cal S$.  We can
 make this drawing topological by a sequence of simple, local
 transformations, as illustrated in Figure~\ref{fig:simplify}. Since we
 assumed that path intersections are simple, the first
 transformation modifies the drawing to achieve the first and third
 properties of a topological drawing and the second transformation
 modifies the drawing to achieve the second property of a topological
 drawing.  These transformations {\em respect intersection} so far as
 that, in the final drawing of $K_{p_g}$, the images of two edges of
 $K_{p_g}$ share a point if and only if the corresponding paths share
 a vertex in $G$.

\begin{figure}[t]
  \centering
  \label{fig:simplify}
  \includegraphics{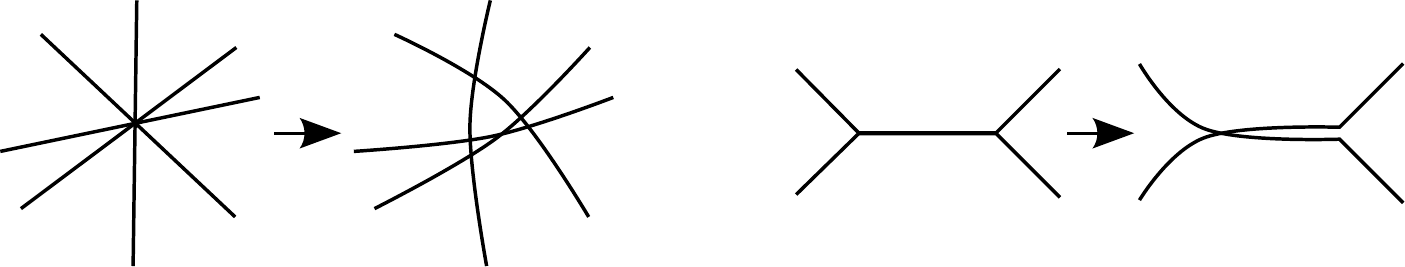}
  \caption{Making a drawing topological; full details are given by Felsner~\cite{Felsner04}.}
\end{figure}

Since the drawing of $K_{p_g}$ is a topological drawing on surface $\cal S$
of genus $g$, we can use Theorem~\ref{thm:cross} to guarantee that, for $p_g$ sufficiently large (and depending only on $g$), this drawing contains a subset of at least
$2q_g-3$ edges that pairwise cross.  Likewise, since
the drawing of $K_{p_g}$ respects intersections, there must be a
subset ${\cal P}'$ of at least $2q_g-3$ paths of $\cal P$ that pairwise
intersect.  We pick the {\em midpoint} of a $c_i$-to-$c_j$ path
$P_{ij} \in {\cal P}$ to be any vertex $m_{ij}$ that is in $B(c_i)\cap
B(c_j)$; since diameter of the graph is at most $2R$, the paths $\cal
P$ are shortest and the balls have radius $R$, such a point always
exists.

\begin{claim}
  For any two paths $P_{ij}, P_{k\ell} \in {\cal P}'$, either $m_{ij} \in B(c_k)\cup B(c_\ell)$ or $m_{k\ell}
\in B(c_i) \cup B(c_j)$.
\end{claim}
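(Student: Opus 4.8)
The plan is to argue by contradiction using the fact that the two paths $P_{ij}$ and $P_{k\ell}$ lie in $\mathcal{P}'$ and hence intersect in $G$. Suppose the claim fails, so $m_{ij} \notin B(c_k)\cup B(c_\ell)$ and $m_{k\ell} \notin B(c_i)\cup B(c_j)$; that is, $m_{ij}$ is at distance $>R$ from both $c_k$ and $c_\ell$, and $m_{k\ell}$ is at distance $>R$ from both $c_i$ and $c_j$. Let $v$ be a vertex common to $P_{ij}$ and $P_{k\ell}$ (which exists since the paths pairwise intersect). The key observation is that $v$ splits each path into two halves, and since the paths are shortest, the distances from $v$ to the endpoints add up to the path lengths.

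First I would record the basic metric facts: $d(c_i, m_{ij}) \le R$ and $d(c_j, m_{ij}) \le R$ by choice of the midpoint, and likewise for $m_{k\ell}$; also $d(c_i, v) + d(v, c_j) = |P_{ij}|$ and $d(c_k, v) + d(v, c_\ell) = |P_{k\ell}|$ because $v$ lies on these shortest paths. Next I would bound $d(v, m_{ij})$: since both $v$ and $m_{ij}$ lie on the shortest path $P_{ij}$, we have $d(v, m_{ij}) \le \max\{d(c_i,v), d(c_j,v)\} \le |P_{ij}| \le 2R$ — but I want something sharper relating $d(v,m_{ij})$ to the ``long'' sides. The cleaner route: $m_{ij}$ sits on $P_{ij}$, so one of the two subpaths of $P_{ij}$ from $v$ passes through $m_{ij}$, say the one toward $c_i$; then $d(v,m_{ij}) = d(v,c_i) - d(m_{ij},c_i) \le d(v,c_i)$, and symmetrically on $P_{k\ell}$ there is an endpoint, say $c_k$, with $d(v,m_{k\ell}) \le d(v,c_k)$. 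Now I would assemble the contradiction via the triangle inequality through $v$: $d(c_k, m_{ij}) \le d(c_k, v) + d(v, m_{ij})$ and $d(c_i, m_{k\ell}) \le d(c_i, v) + d(v, m_{k\ell})$. Adding these, and using the two bounds from the previous sentence, gives $d(c_k,m_{ij}) + d(c_i,m_{k\ell}) \le (d(c_k,v)+d(v,c_k)) + \dots$ — I need to be careful which endpoints $m_{ij}$ and $m_{k\ell}$ lie between, so the argument should case on that, but in every case the two triangle inequalities through $v$ combine so that the right-hand side telescopes to at most $2R$ (using $d(c_i,m_{ij}) \le R$ etc. to cancel the midpoint terms), forcing $\min\{d(c_k,m_{ij}), d(c_i,m_{k\ell})\} \le R$, contradicting the assumption.

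The main obstacle I anticipate is the bookkeeping of \emph{which} endpoint of each path the common vertex $v$ is ``closer to'' relative to the midpoint — i.e., whether $m_{ij}$ lies on the $c_i$-side or $c_j$-side of $v$ on $P_{ij}$, and similarly for $P_{k\ell}$. This splits into a few symmetric cases, and in each one the right pair of endpoints must be chosen so that the telescoping works and so that the conclusion $m_{ij} \in B(c_k) \cup B(c_\ell)$ (rather than forcing a wrong containment) comes out. I expect that after fixing, without loss of generality, that $m_{ij}$ lies between $v$ and $c_i$ and $m_{k\ell}$ lies between $v$ and $c_k$, the inequality $d(c_k, m_{ij}) + d(c_i, m_{k\ell}) \le 2R$ will follow cleanly from $d(c_k,m_{ij}) \le d(c_k,v) + d(v,m_{ij})$, $d(v,m_{ij}) \le d(v,c_i)$, the symmetric statements, and $d(v,c_i) + d(v,c_k) \le d(c_i, m_{ij}) + d(c_k, m_{k\ell}) \le 2R$ coming from the shortest-path additivity at $v$ together with the midpoint bounds — and the remaining cases are handled identically after relabeling.
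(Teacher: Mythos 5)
There is a genuine gap in the assembly step. Your setup (route through a common vertex $v$, use additivity along the shortest paths and the midpoint bounds $d(c_i,m_{ij})\le R$, etc.) is sound, but the two ingredients you lean on at the end do not hold. First, under your own normalization that $m_{ij}$ lies between $v$ and $c_i$ and $m_{k\ell}$ between $v$ and $c_k$, the inequality $d(v,c_i)+d(v,c_k)\le d(c_i,m_{ij})+d(c_k,m_{k\ell})$ is reversed: additivity gives $d(v,c_i)=d(c_i,m_{ij})+d(m_{ij},v)\ge d(c_i,m_{ij})$, and likewise for $c_k$, so the telescoping you hope for goes the wrong way. Second, the target conclusion $\min\{d(c_k,m_{ij}),\,d(c_i,m_{k\ell})\}\le R$ is simply false in this case in general: take $R=10$ and a star centered at $v$ with arms of lengths $10,1,10,1$ to $c_i,c_j,c_k,c_\ell$, with $P_{ij}=c_i\text{--}v\text{--}c_j$, $P_{k\ell}=c_k\text{--}v\text{--}c_\ell$, and midpoints at distance $1$ from $c_i$ and from $c_k$. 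Then $d(c_k,m_{ij})=d(c_i,m_{k\ell})=19>R$, yet the claim holds because $d(m_{ij},c_\ell)=d(m_{k\ell},c_j)=10\le R$. In other words, with your choice of sides the disjuncts that can be certified are the ones involving the \emph{far-side} endpoints $c_j$ and $c_\ell$, not $c_i$ and $c_k$, and no pair-sum bounded by $2R$ plus an averaging step is needed (the averaging over pairs belongs to the later step of the paper's argument over all of ${\cal P}'$, not inside this claim).

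The missing idea is a selection step: let $v$ be a common vertex and let $c$ be the endpoint among $\{c_i,c_j,c_k,c_\ell\}$ \emph{closest to $v$}; say w.l.o.g.\ $c$ is an endpoint of $P_{ij}$. Whichever side of $m_{k\ell}$ the vertex $v$ lies on along $P_{k\ell}$ --- say the $c_k$ side --- additivity along that shortest path gives $d(m_{k\ell},v)+d(v,c_k)=d(m_{k\ell},c_k)\le R$, and since $d(v,c)\le d(v,c_k)$ the triangle inequality yields $d(m_{k\ell},c)\le d(m_{k\ell},v)+d(v,c)\le R$, i.e.\ $m_{k\ell}\in B(c_i)\cup B(c_j)$ directly. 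This is exactly the paper's (direct, not by contradiction) proof; your version can be repaired by replacing the ``which side of $v$ does each midpoint lie on'' casing plus summation with this single comparison of $d(v,\cdot)$ over the four endpoints (or, equivalently, by comparing $d(v,c_j)$ with $d(v,c_\ell)$ in your normalized case).
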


\begin{proof}
  Let $x$ be a vertex shared by both $P_{ij}$ and $P_{k\ell}$.
  Assume, w.l.o.g., that $c_i$ is the closest of the endpoints of
  $P_{ij}$ and $P_{k\ell}$ ($\{c_i,c_j,c_k,c_\ell\}$) to $x$.  Also
  assume, w.l.o.g., that $x$ is in the $c_k$-to-$m_{k\ell}$ subpath of
  $P_{k\ell}$.  Since $m_{k\ell} \in B(c_k)$ the distance from
  $m_{k\ell}$ to $x$ to $c_k$ is at most $R$ and since $c_i$ is closer
  to $x$ than $c_k$, then the distance from $m_{k\ell}$ to $x$ to
  $c_i$ is also at most $R$, therefore $m_{k\ell} \in B(c_i)$.
\hfill \qed \end{proof}

Since this claim holds for every pair of paths, by an averaging
argument, there must be some path $P_{ij}$ whose midpoint is contained
in the ball centered at the endpoint of at least $\frac{1}{|{\cal
    P}'|} \cdot {|{\cal P}'| \choose 2} = \frac{1}{2}(|{\cal P}'|-1)
\ge q_g-2$ paths.  Since $m_{ij}$ is additionally contained in $B(c_i)
\cap B(c_j)$, $m_{ij}$ is a point contained in $q_g$ balls, showing
that the ball system for $G$ has the $(p_g,q_g)$-property.

\section{Handling apices and toward minor-excluded graphs} \label{sec:apex}

The Graph Minor Structure Theorem is one of many results of Robertson
and Seymour leading to the Graph Minor theorem.  The Graph Minor
Structure Theorem shows that for a fixed graph $H$, any graph
excluding $H$ as a minor is composed of graphs that, after the removal
of a fixed number of vertices, can be embedded on a surface in which
$H$ cannot be embedded with a fixed number of {\em vortices}
(described below).  These subgraphs are glued together in a tree-like
structure called a tree decomposition.

We are able to show that we can {\em remove} the apices, so to speak,
of one of these graphs (Section~\ref{sec:rem-ap}) and that there is
one subgraph within which every other vertex is distance $R$ 
(Section~\ref{sec:central}).  Of course, this subgraph may be quite
large, but since this subgraph is nearly embeddable on some surface,
after removal of the apices, it may be possible to use arguments
similar to those in Section~\ref{sec:pq}.  We discuss this further in
Section~\ref{sec:mf-decomp}.

\subsection{Removing apices} \label{sec:rem-ap}

We show something stronger than that of {\em removing apices from
  bounded genus graphs}:

\begin{lemma}\label{lem:apex-removal}
  Let $\cal G$ be a class of graphs whose ball systems have
  VC-dimension at most $q-1$ and satisfy the $(p,q)$-property.  Then
  there is a constant $\rho$ such that any graph in $\cal G$ with an
  additional $\alpha$ apices and diameter at most $2R$ can be covered
  by at most $\rho+\alpha$ balls of radius $R$.
\end{lemma}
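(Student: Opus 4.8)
The plan is to cover the apex-free part of $G$ with a constant number of balls by the same Fractional Helly argument used in the apex-free case, and then to spend one additional ball per apex. Write $G$ as a graph $G_0 \in {\cal G}$ together with $\alpha$ apex vertices $a_1,\dots,a_\alpha$ joined to $G_0$ (and possibly to each other) by arbitrary edges, so that $G_0 = G - \{a_1,\dots,a_\alpha\}$ lies in ${\cal G}$.

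First I would apply the Fractional Helly Theorem (Theorem~\ref{thm:pq}) to the ball system ${\cal B}(G_0)$. By hypothesis ${\cal B}(G_0)$ has the $(p,q)$-property, and by Observation~\ref{obs:dual} its dual set system is ${\cal B}(G_0)$ itself, which by hypothesis has VC-dimension at most $q-1$. Hence Theorem~\ref{thm:pq} produces a hitting set $S$ of ${\cal B}(G_0)$ whose size is bounded by a constant $\rho$ depending only on $p$ and $q$. As noted in the introduction, a hitting set of ${\cal B}(G_0)$ is a set $S \subseteq V(G_0)$ meeting every ball $B_{G_0}(x)$; since graph distance is symmetric this says exactly that every vertex of $G_0$ is within distance $R$ in $G_0$ of some $s \in S$, i.e. the balls $\{B_{G_0}(s) : s \in S\}$ cover $V(G_0)$.

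Next I would lift this cover from $G_0$ to $G$. Since $G_0$ is a subgraph of $G$, distances can only decrease: $d_G(u,v) \le d_{G_0}(u,v)$ for all $u,v \in V(G_0)$, so $B_{G_0}(s) \subseteq B_G(s)$ for each $s \in S$. Thus the $\rho$ balls $\{B_G(s) : s \in S\}$ already cover $V(G_0)$, and adding the $\alpha$ balls $B_G(a_1),\dots,B_G(a_\alpha)$ (each of which contains at least its own centre) covers the apices as well. This exhibits a cover of $V(G)$ by at most $\rho+\alpha$ balls of radius $R$ in $G$.

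I do not expect a real obstacle here. The only point needing care is that $G_0$ need not be connected and need not have diameter $2R$, so one should not invoke Theorem~\ref{thm:main} or any diameter-based statement about $G_0$; this is harmless, since the $(p,q)$-property and the VC-dimension bound are assumed directly on the class ${\cal G}$, so Theorem~\ref{thm:pq} applies to ${\cal B}(G_0)$ verbatim. The single substantive observation is the monotonicity remark that deleting the apices only shrinks distances, so a ball cover of the apex-free subgraph lifts to a ball cover of $G$ using the same centres.
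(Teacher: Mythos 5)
Your proof is correct and is essentially the paper's own argument: the paper applies the Fractional Helly Theorem to the subfamily of balls of $G$ that avoid the apex set $A$ (these coincide with balls of $G\setminus A$, so the hypotheses on ${\cal G}$ apply) and then adds $A$ to the resulting hitting set, which by Observation~\ref{obs:dual} is exactly the $(\rho+\alpha)$-ball cover you construct directly. The only differences are presentational---hitting-set language versus an explicit cover of $G\setminus A$ lifted to $G$ by distance monotonicity---and your caveat that $G\setminus A$ may be disconnected or have large diameter is handled the same way in both arguments, since the $(p,q)$-property and VC-dimension bound are assumed for the class ${\cal G}$ itself.
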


\begin{proof}
  Let $G$ be a graph such that for a subset of at most $\alpha$ vertices $A$, $G
  \setminus A \in {\cal G}$.  Let $\cal B$ be the ball system for $G$
  and let ${\cal B}'$ be the subset of those balls that do not
  intersect $A$.  The VC-dimension of ${\cal B}'$ is at most that of
  ${\cal B}$, which is at most $q-1$.  Likewise, since $\cal B$ has
  the $(p,q)$-property, so does ${\cal B}'$.  By the Fractional Helly
  Theorem, it follows that ${\cal B}'$ has a hitting set of size at
  most $\rho$; this hitting set along with $A$ is a hitting set for
  $\cal B$. 
\hfill \qed \end{proof}

\subsection{The central node of a tree decomposition} \label{sec:central}

A tree decomposition $\cal T$ of a graph $G = (V,E)$ is a pair $(T,
{\cal X})$ where $T$ is a tree and $\cal X$ is a family of subsets (or
{\em bags}) of $V$ such that:
\begin{itemize}
\item Each node $a$ of $T$ has a corresponding subset $X_a \in {\cal
    X}$ and $\cup_{X \in {\cal X}} X = V$;
\item For every edge $uv \in E$ there is a bag $X \in {\cal X}$ such
  that $u,v \in X$.
\item For any three nodes $a,b,c \in T$ such that $b$ is on the
  $a$-to-$c$ path in $T$, $X_a \cap X_c \subseteq X_b$.
\end{itemize}
We refer to the {\em nodes} of $T$ and {\em vertices} of $G$ to avoid
confusion.  The width of a tree decomposition $(T,{\cal X})$ is
$\max_{X \in{\cal X}} |X|-1$.  Tree decompositions are not unique.
The treewidth of a graph is the minimum possible width of a tree
decomposition of the graph.

We show that given a tree decomposition of a graph of diameter $2R$,
there is a node $a$ of the tree decomposition such that every vertex
in the graph is within distance $R$ of some vertex in $X_a$.  This is
similar to Theorem~5 by Gavoille et~al.~\cite{GPRS01}, but we are
specific about the node of interest in the tree decomposition.  We
include the proof below for completeness.

\begin{theorem}[Central node] \label{thm:central-node}
  There is a node $v$ of a tree decomposition ${\cal T} = (T,{\cal
    X})$ of a graph $G$ with diameter at most $2R$ such that every vertex of
  $G$ is within distance $R$ of some vertex in $X_v$; i.e.\ $d(x,X_v)
  \le R$ for every vertex $x$ of $G$.
\end{theorem}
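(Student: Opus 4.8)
The plan is to exploit the fact that each bag of a tree decomposition acts as a separator of $G$, and to use it to orient the edges of the tree $T$; the node we want will be a sink of this orientation. First I recall the separator property: for an edge $ab$ of $T$, deleting $ab$ splits $T$ into components $T_a \ni a$ and $T_b \ni b$; setting $V_a = \bigcup_{c \in V(T_a)} X_c$ and $V_b = \bigcup_{c \in V(T_b)} X_c$, the third axiom of a tree decomposition gives $V_a \cap V_b = X_a \cap X_b$, every vertex of $G$ lies in $V_a \cup V_b$, and no edge of $G$ joins $V_a \setminus (X_a\cap X_b)$ to $V_b \setminus (X_a \cap X_b)$. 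Write $S_{ab} = X_a \cap X_b$; then every path in $G$ from a vertex of $V_a$ to a vertex of $V_b$ meets $S_{ab}$.

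Next I orient each edge $ab$ of $T$ toward $b$ if there is a vertex $x \in V_b \setminus S_{ab}$ with $d(x, S_{ab}) > R$, and toward $a$ otherwise (breaking ties arbitrarily toward $a$ when neither side has such a vertex). The key observation is that this rule is consistent: if there were both an $x \in V_b \setminus S_{ab}$ and a $y \in V_a \setminus S_{ab}$ at distance more than $R$ from $S_{ab}$, then any $x$–$y$ path in $G$ would pass through some $s \in S_{ab}$, whence $d(x,y) = d(x,s) + d(s,y) \ge d(x,S_{ab}) + d(y,S_{ab}) > 2R$, contradicting $\diam{G} \le 2R$. So at most one of the two orientations can be ``forced,'' and the rule assigns each edge a well-defined direction.

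Since $T$ is a finite tree, this orientation has a sink $v$ (starting from any node and following outgoing edges cannot repeat a node, so the walk terminates at a node with no outgoing edge). I then claim $v$ is the required central node. Take any vertex $x$ of $G$ and pick a node $a$ with $x \in X_a$; if $a = v$ then $d(x,X_v) = 0$. Otherwise let $vb$ be the first edge on the $v$-to-$a$ path in $T$. Because $v$ is a sink, $vb$ points toward $v$, so by the orientation rule no vertex of $V_b \setminus S_{vb}$ is at distance more than $R$ from $S_{vb}$. Since $a$ lies in the $b$-side component of $T - vb$, we have $x \in V_b$; if $x \in S_{vb} \subseteq X_v$ then $d(x,X_v) = 0$, and otherwise $x \in V_b \setminus S_{vb}$ gives $d(x, S_{vb}) \le R$, so $d(x, X_v) \le d(x, S_{vb}) \le R$ as $S_{vb} = X_v \cap X_b \subseteq X_v$. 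As $x$ was arbitrary, $v$ works.

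The only delicate part is designing the orientation rule so that it is simultaneously well-defined — which is exactly where the diameter bound and the separator property are used together — and strong enough that every sink is forced to be central; once the rule is in place, the rest follows directly from the tree-decomposition axioms and the elementary fact that an orientation of a tree has a sink.
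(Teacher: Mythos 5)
Your proof is correct. It follows the same underlying strategy as the paper --- move through the tree toward the side containing vertices far from the current separator, with the diameter bound ruling out two conflicting directions --- but your bookkeeping is genuinely different and arguably cleaner. The paper separates at a \emph{node}: it proves (its Lemma on uniqueness) that for each node $u$ at most one component of $T\setminus\{u\}$ contains a vertex at distance more than $R$ from $X_u$, runs a search procedure that steps into that component, and then needs an extra argument (a shortest path between two far vertices and its midpoint) to rule out the search oscillating between two adjacent nodes $p$ and $q$, since the uniqueness statements at $p$ and at $q$ refer to two different separators $X_p$ and $X_q$. You instead separate at an \emph{edge}, using the single adhesion set $S_{ab}=X_a\cap X_b$ for both sides at once; the one inequality $d(x,y)\ge d(x,S_{ab})+d(y,S_{ab})>2R$ makes the orientation of each edge well defined, termination reduces to the trivial fact that an orientation of a finite tree has a sink, and the paper's oscillation argument disappears. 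Your conclusion is even marginally stronger per side, since distances are bounded to $S_{vb}\subseteq X_v$ rather than to all of $X_v$. One presentational remark: your rule ``toward $b$ if the $b$-side has a far vertex, toward $a$ otherwise'' is stated asymmetrically in $a$ and $b$; it is worth saying explicitly that, by your consistency claim, the edge always points toward the unique far side whenever one exists, since that is exactly the form of the rule your sink argument uses.
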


Consider a node $u$ of $T$ and the corresponding bag $X_u \in {\cal
  X}$. Removing $u$ from $T$ and $X_u$ from $G$ results in $k \ge 1$
subgraphs, each with a tree decomposition derived from ${\cal T}$.
Formally, let $T^1_u, \ldots, T^k_u$ be the components of $T \setminus
\{u\}$.  Let ${\cal X}_u^j$ be the bags corresponding to nodes of
$T_u^j$ with the vertices in $X_u$ removed: ${\cal X}_u^j = \{ X_v
\setminus X_u \ : \ v \in T_u^j\}$.  Let $V_u^j$ be the vertices in
the bags corresponding to nodes of $T_u^j$ with $X_v$ removed: $V_u^j
= \cup {\cal X}_u^i$. ${\cal T}_u^j = (T_u^j, {\cal
  X}_u^j)$ is a tree decomposition of the subgraph of $G$ induced by
$V_u^j$. Since $X_u$ is a vertex separator, any $v$-to-$w$ path in $G$
for $v \in V_u^i$ and $w \in V_u^j$ ($i \ne j$) must contain a vertex
of $X_u$.

Let $d(x,y)$ be the shortest-path distance between $x$ and $y$ in $G$.
For a subset of vertices $Y$, let $d(x,Y)$ be the minimum distance
from $x$ to any vertex of $Y$, so $d(x,Y) = \min_{y \in Y}d(x,y)$. For
any two subsets $X$ and $Y$, let $f(X,Y)$ be the furthest vertex in
$X$ from $Y$; i.e.\ $f(X,Y) = \arg\max_{x \in X} d(x,Y)$.

\begin{lemma}\label{lem:unique}
  If the distance from the furthest vertex in $V^i_u$ to $X_u$ is
  greater than $R$ for any $i$, then for every $j \ne i$, the distance
  from the furthest vertex in $V^j_u$ is strictly less than $R$.
\end{lemma}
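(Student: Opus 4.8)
The plan is to argue by contradiction using the diameter bound together with the fact that $X_u$ separates $V^i_u$ from $V^j_u$. Suppose the furthest vertex $x = f(V^i_u, X_u)$ satisfies $d(x, X_u) > R$, and suppose for contradiction that there is some $j \ne i$ and a vertex $y \in V^j_u$ with $d(y, X_u) \ge R$ (so that the "strictly less than $R$" conclusion fails). First I would invoke the separator property: since $x \in V^i_u$, $y \in V^j_u$ and $i \ne j$, every $x$-to-$y$ path in $G$ passes through some vertex $z \in X_u$. Hence $d(x,y) = \min_{z \in X_u}\big(d(x,z) + d(z,y)\big) \ge d(x,X_u) + d(y,X_u) > R + R = 2R$, contradicting the assumption that $G$ has diameter at most $2R$. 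This establishes that no such $y$ exists, i.e.\ every vertex of $V^j_u$ is at distance strictly less than $R$ from $X_u$, which is exactly the claim for that $j$; since $j \ne i$ was arbitrary, we are done.

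The one point that needs a little care is the decomposition of $d(x,y)$ through the separator: I would state explicitly that a shortest $x$-to-$y$ path must cross $X_u$ (as recorded in the paragraph preceding the lemma, $X_u$ is a vertex separator between distinct components $V^i_u$ and $V^j_u$), pick the crossing vertex $z$, and split the path at $z$ to get $d(x,y) \ge d(x,z) + d(z,y)$; then bound each summand below by $d(x,X_u)$ and $d(y,X_u)$ respectively. A minor edge case is whether $V^j_u$ could be empty, in which case the statement is vacuously true; I would note this in passing so the averaging/contradiction step is not derailed.

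I do not expect a serious obstacle here — the lemma is essentially a triangle-inequality-through-a-separator argument — so the "hard part" is merely bookkeeping: being precise that $x$ being the \emph{furthest} vertex of $V^i_u$ from $X_u$ is what forces $d(x,X_u) > R$ to be read as "\emph{every} vertex of $V^i_u$ could be that far," while on the $V^j_u$ side we only need a single violating vertex to derive the contradiction, which then upgrades to the universally-quantified "strictly less than $R$" conclusion. I would close by remarking that this immediately implies at most one component $V^i_u$ can contain a vertex at distance exceeding $R$ from $X_u$, which is the form in which the lemma will be used to locate the central node in Theorem~\ref{thm:central-node}.
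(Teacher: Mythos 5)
Your proposal is correct and follows essentially the same argument as the paper: both split a shortest path between the two far vertices at a vertex of the separator $X_u$ and combine $d(\cdot,X_u)$ lower bounds with the diameter bound $2R$; the only difference is that you phrase it as a contradiction while the paper writes the same chain of inequalities directly.
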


\begin{proof}
  Let $f_i = f(V^i_u,X_u)$ and let $f_j = f(V^j_u,X_u)$ for $i \ne j$.

  Let $x$ be a vertex in $X_u$ that is on a shortest path from
  $f_i$ Note also that since
  $f(V^i_u,X_u)$
  and $f(V^j_u,X_u)$ are both vertices in $G$,
  $d(f(V^i_u,X_u),f(V^j_u,X_u)) \le 2R$. So we have:
  \begin{eqnarray*}
    2R & \ge & d(f(V^i_u,X_u),f(V^j_u,X_u)) 
    \\&=& d(f(V^i_u,X_u),x)+d(f(V^j_u,X_u),x)\\
    &\ge& d(f(V^i_u,X_u),X_u)+d(f(V^j_u,X_u),X_u) \\
    &> &R+ d(f(V^j_u,X_u),X_u)
\end{eqnarray*}
  The above then immediately implies that $d(f(V^j_u,X_u),X_u) < R$.
\hfill \qed \end{proof}

  Consider the following procedure for finding the central node, starting
  at an arbitrary node $r$:
  \begin{tabbing}
  1\qquad \=  {\sc search}$(r)$\\
  2 \> \qquad\= If $d(x,X_r) \le R$ for all $x \in V(G)$, return $r$. \\
  3 \> \> Otherwise:\\
  4 \> \> \qquad \= Let $p$ be a node adjacent to $r$ in $T$ such that $d(f(V_r^i,X_r),X_r) > R$ and $p \in T_r^i$. \\
  5 \> \> \> {\sc search}$(p)$.
  \end{tabbing}
  
  It is clear that if this procedure terminates, then the statement of
  the lemma is true.  It remains to argue that the algorithm must
  terminate.  If we we reach line 4, then, by Lemma~\ref{lem:unique},
  $p$ is unique.  If {\sc search} does not terminate, then it is easy
  to see that {\sc search} must oscillate between two adjacent nodes
  $p$ and $q$ of the tree decomposition: {\sc search}$(p)$ calls {\sc
    search}$(q)$ and vice versa.  In this case, there must be a vertex
  $x \in T_q^i$ where $i$ is such that $d(f(V_q^i,X_q),X_q) > R$ and
  $p \in T_q^i$ and a vertex $y \in T_p^j$ where $j$ is such that let
  $d(f(V_p^j,X_p),X_p) > R$ and $q \in T_p^j$.  Let $S$ be a shortest
  $x$-to-$y$ path; by definition of $p$ and $q$, $S$ must visit a
  vertex $a \in X_p$ and a vertex $b \in X_q$ (possibly $a = b$). Let
  $m$ be a vertex closest to the middle of $S$.  Since the diameter of
  $G$ is at most $2R$, $d(x,m)$ and $d(y,m)$ is at most $R$.
  Therefore $b$ must come after $m$ along $S$ from $x$ to $y$ and $a$
  must come after $m$ along $S$ from $y$ to $x$.  It must be that $m =
  a = b$, contradicting that $d(x,X_q) > R$ and $d(y,X_p) > R$. This
  concludes the proof of the Central Node Theorem.

Theorem 5 of Gavoille et~al.'s work is an immediate corollary of Theorem~\ref{thm:central-node}:

\begin{corollary}[Theorem 5~\cite{GPRS01}]
  For a graph with treewidth $tw$ and diameter $2R$, there is a set $S$ of at most
  $tw+1$ vertices such that $d(x,S) \leq R$ for every vertex $x$ in
  the graph.
\end{corollary}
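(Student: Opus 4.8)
The plan is to derive the corollary directly from the Central Node Theorem. Given a graph $G$ of treewidth $tw$ and diameter at most $2R$, fix a tree decomposition ${\cal T} = (T, {\cal X})$ of $G$ of width exactly $tw$, so that $|X_a| \le tw + 1$ for every node $a$ of $T$. Theorem~\ref{thm:central-node} guarantees the existence of a node $v$ of $T$ such that $d(x, X_v) \le R$ for every vertex $x$ of $G$. Setting $S = X_v$ then gives a set of at most $tw + 1$ vertices with the desired covering property, which is exactly the statement of the corollary.

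The only subtlety is that the Central Node Theorem was stated for an arbitrary tree decomposition without reference to its width, so one must simply note that applying it to a \emph{width-$tw$} decomposition yields a central node whose bag has the size bound we need; the size bound is inherited verbatim from the width bound on bags. No additional argument is required, since Theorem~\ref{thm:central-node} already does all the work of locating the node and verifying the distance condition.

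Since there is no real obstacle here — the corollary is a one-line specialization of the preceding theorem — I would simply write: ``Apply Theorem~\ref{thm:central-node} to a tree decomposition of $G$ of width $tw$. Every bag of such a decomposition has size at most $tw+1$, so the bag $X_v$ of the central node $v$ is the required set $S$.'' If one wanted to be fully self-contained, one could also observe that this recovers the original formulation of Gavoille et~al., whose proof establishes essentially the same dichotomy (Lemma~\ref{lem:unique}) but does not pin down which node works; our phrasing is strictly more informative.
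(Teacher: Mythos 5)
Your proposal is correct and matches the paper's intent exactly: the paper presents this corollary as an immediate consequence of Theorem~\ref{thm:central-node}, obtained precisely by applying that theorem to a width-$tw$ tree decomposition and taking $S = X_v$, whose size is at most $tw+1$ by the width bound. No gap; your one-line specialization is the paper's argument.
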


\subsection{Minor-free decompositions} \label{sec:mf-decomp}

Finally, we outline a direction for extending this result to
minor-free graph classes and describe the challenges.

Robertson and Seymour showed that for any graph $G_H$ that excludes a
fixed minor $H$, $G_H$ has a well-defined
structure~\cite{Robertson200343}.  Using the notation and terminology
of Demaine et~al.~\cite{HMinorFree_JACM}, the Graph Minor Structure
Theorem states that $G_H$ is obtained by $h$-clique sums of graphs
that are {\em $h$-almost embeddable} on surfaces in which $H$ cannot
be embedded.  A graph $G$ is
\emph{$h$-almost-embeddable} on a surface $S$ if:
\begin{itemize}
\item There is a set $A$ of at most $h$ vertices, called \emph{apex}
  vertices, such that $G\setminus A$ can be written as a union of graphs $G_0
  \cup G_1 \cup \cdots \cup G_h$ where $G_0$ can be cellularly
  embedded on $S$.

\item For every $i > 0$, $G_i$ is a graph, called a {\em vortex}, that
  has a tree-decomposition that is a path with nodes in order
  $x_i^1,x_i^2, \ldots$ and width at most $h$.

\item For every $i > 0$, there is a face $F_i$ such that $u_i^1,
  u_i^2, \ldots$ is a subset of the boundary vertices of $F_i$ in
  order along the boundary of $F_i$ and $u_i^j \in X_{x_i^j}$ for all $j$.
\end{itemize}
Note that since $H$ is fixed, the surfaces in which the
components of $G_H$ are almost embeddable have fixed genus.

An $h$-clique sum between graphs $A$ and $B$ identifies the vertices
of a clique on at most $h$ vertices in $A$ and $B$ and then possibly
removes some edges of the clique.  The clique-sum of graphs provides a
natural tree decomposition.  Specifically, $G_H$ admits a tree
decomposition $(T,{\cal X})$ such that for every $X \in {\cal X}$, the
subgraph of $G_H$ induced by $X$ is $h$-almost embeddable and the
intersection of any two sets of $\cal X$ contains at most $h$
vertices.  Using this decomposition, we define the {\em central subgraph}
of $G_H$ as the subgraph of $G_H$ induced by the vertices in the
central node of this tree decomposition.

Focussing on this central subgraph, we can {\em remove} the apices by
way of Lemma~\ref{lem:apex-removal}.  Now, in the efforts to prove the
$(p,q)$-property for the set of balls not intersecting apices of the
central subgraph, consider a set of $p$ balls for sufficiently large
$p$.  We can assume w.l.o.g.\ that at most one ball center is in each
of the neighboring $H$-minor-free graphs that are clique-summed to the
central subgraph; if a large number of ball centers are in one
neighbor, then since the balls must all reach the central subgraph, a
large enough number of them must share a vertex, since the clique sums
are small.  

We can then focus on center-to-center shortest paths, as in
Section~\ref{sec:pq}.  For this proof technique, we need to show that
among a set of center-to-center shortest paths, a sufficiently large
number of them share an interior vertex.  While these paths must cross
the central subgraph and parts of them must be embedded on the surface
that the bulk of the central subgraph is embedded on, these paths can
use the clique sums and vortices to {\em hop} over eachother, crossing
without intersecting.  It does not seem possible to bound how much
this can happen since the {\em number} of vortices and clique sums is
not bounded.  so it is likely that a more global argument, taking into
account the balls and not just the shortest paths between ball
centers, will be required in order to illustrate the $(p,q)$-property.

\section{Discussion} \label{sec:future}

This paper presents a generalization of the ball-cover property to bounded genus graphs with a constant number of apices.  
This represents a significant step towards showing this result holds for all minor-free families of graphs.  This work leaves open this direct question and several others.  

For one, these results, ours and that of Chepoi et~al., do not
evaluate the explicit number of balls required for coverage, relying
as we do, on the Fractional Helly Theorem.  Tracing the constant
through Matou\u{s}ek's work reportedly results in a constant in excess
of 800~\cite{talk} while the best lower bound known is
4~\cite{GPRS01}.  A direct proof, bypassing the Fractional Helly
Theorem, is likely necessary to result in more practical answers.
Likewise, an algorithmic result is desirable, particularly if the
application to interval routing is to be taken seriously.

Further, since our planarizing set (Lemma~\ref{lem:planarize}) reduces
a graph of genus $g$ to a planar graph after the removal of
$O(g||G||_\delta)$ edges and since Gazit and Miller give an
$O(||G||_\delta)$ balanced edge separator for planar graphs, we can
combine these results to get an $O(g||G||_\delta)$ edge separator for
genus-$g$ graphs.  The obvious question is whether an $O(\sqrt{g}||G||_\delta)$, balanced edge separator exists for genus-$g$ graphs.  Much like Gazit and Miller's separator is a strictly tighter bound on size than the pre-existing $O(\sqrt{\delta_{\max} n})$ balanced edge separator for planar graphs~\cite{Miller86,DDSV93}, an $O(\sqrt{g}||G||_\delta)$, balanced edge separator for genus-$g$ graphs would be a strictly tighter bound.  Our implied  $O(g||G||_\delta)$ edge separator results in a set of planar graphs, since the procedure starts by planarizing the graph; it is likely that a tighter bound of  $O(\sqrt{g}||G||_\delta)$ would not result in a set of planar graphs.

\paragraph{Acknowledgements} We thank Anastasios Sidiropoulos and Mark Walsh for helpful discussions.  This material is based upon work supported by
the National Science Foundation under Grant Nos.\
CCF-0963921 and CCF-1054779.

\bibliographystyle{plain} 
\bibliography{ballcover}

\end{document}